\documentclass[11pt]{article} 

\usepackage{fullpage}

\usepackage{epsfig} 
\usepackage{amsmath,amsthm,paralist,bm,yhmath}
\usepackage{amssymb}
\usepackage{epstopdf}
\epstopdfsetup{suffix=}

\usepackage{graphicx}

\usepackage{cite}


\usepackage[footnotesize]{caption}

\makeatletter
\let\NAT@parse\undefined
\makeatother
\usepackage[sort&compress, numbers]{natbib}

\usepackage{upgreek}
\usepackage[usenames]{color}
\DeclareGraphicsRule{.tif}{png}{.png}{`convert #1 `dirname #1`/`basename #1 .tif`.png}

\newtheorem{theorem}{Theorem}

\newtheorem{lemma}{Lemma}

\newcommand{\sign}{{\rm sign}\,}

\newcommand{\cl}{\mathcal}
\newcommand{\bs}{\boldsymbol}

\setlength{\parskip}{0.1cm}

\newlength{\imgwidth}
\setlength{\imgwidth}{89mm}


\usepackage{hyperref}
\hypersetup{colorlinks=true, linkcolor=black,  anchorcolor=black,
citecolor=black, filecolor=black, menucolor=black, pagecolor=black, urlcolor=black} 


\pagestyle{plain}

\title{\textbf{Regime Change: Bit-Depth versus\\ Measurement-Rate in Compressive Sensing}\thanks{This work was supported by the
    grants NSF CCF-0431150, CCF-0728867, CCF-0926127, CNS-0435425, and
    CNS-0520280, DARPA/ONR N66001-08-1-2065, N66001-11-1-4090, ONR N00014-07-1-0936,
    N00014-08-1-1067, N00014-08-1-1112, and N00014-08-1-1066, AFOSR
    FA9550-07-1-0301 and FA9550-09-1-0432, ARO MURI W911NF-07-1-0185
    and W911NF-09-1-0383, and the Texas Instruments Leadership
    University Program.}
\author{Jason N. Laska and Richard G. Baraniuk\thanks{Department of Electrical and Computer Engineering,
    Rice University, Houston, TX, 70015 USA. Email: laska@rice.edu,
    richb@rice.edu.}}} 
    \date{\today}

\begin{document}
\maketitle
\begin{abstract} 
\noindent The recently introduced \emph{compressive sensing} (CS) framework enables digital signal acquisition systems to take advantage of signal structures beyond bandlimitedness.  Indeed, the number of CS measurements required for stable reconstruction is closer to the order of the signal complexity than the Nyquist rate.  To date, the CS theory has focused on real-valued measurements, but in practice, measurements are mapped to bits from a finite alphabet.  Moreover, in many potential applications the total number of measurement bits is constrained, which suggests a tradeoff between the number of measurements and the number of bits per measurement.  We study this situation in this paper and show that there exist two distinct regimes of operation that correspond to high/low signal-to-noise ratio (SNR).  In the \emph{measurement compression} (MC) regime, a high SNR favors acquiring fewer measurements with more bits per measurement; in the  \emph{quantization compression} (QC) regime, a low SNR favors acquiring more measurements with fewer bits per measurement.  A surprise from our analysis and experiments is that in many practical applications it is better to operate in the QC regime, even acquiring as few as 1 bit per measurement.
\end{abstract}

\section{Introduction}
\label{sec:intro}

The \emph{compressive sensing} (CS) framework has sparked renewed interest in sampling and signal acquisition~\cite{Can::2006::Compressive-sampling,Don::2006::Compressed-sensing}. The framework can be concisely summarized by three fundamental components: \emph{i}) \emph{underdetermined linear measurement systems}, i.e., we obtain the measurements
\begin{equation}
\label{eq:ypx}
\bs y = \Phi \bs x + \bs e,
\end{equation}
of the signal  $\bs x \in \mathbb{R}^{N}$, with $\Phi \in \mathbb{R}^{M\times N}$ and $M\ll N$, and with measurement error $\bs e \in \mathbb{R}^{M}$;  \emph{ii}) \emph{structured signal models}, such as $K$-sparse signals, i.e., $\bs x \in \Sigma_{K} := \{\bs x \in \mathbb{R}^{N}: \|\bs x\|_{0} :=
|\mathrm{supp}(\bs x)| \leq K\}$;  and \emph{iii}) \emph{computational reconstruction}, one example being the convex program known as \emph{Basis Pursuit Denoising} (BPDN),
\begin{equation}
\label{eq:BPDN}
\widehat{\bs x} = \min_{\bs x \in \mathbb{R}^{N}}\|\bs x\|_{1}~~\mbox{s.t.}~~\|\bs y - \Phi \bs x\|_{2} < \epsilon,
\end{equation}
that guarantees  $\|\bs x - \widehat{\bs x}\|_{2} \leq C\epsilon$ for $\|\bs e\|_{2} < \epsilon$, $C$ a constant, and under certain conditions on $\Phi$~\cite{CandesDLP}.
A significant body of work has been devoted to the study of each of these components individually, e.g., by \emph{a}) characterizing conditions on $\Phi$ that provide robust mappings of sparse signals and designing physical sampling systems that satisfy such conditions~\cite{TroppG_Signal,CandesDLP,TroLasDua::2009::Beyond-Nyquist:,DuaDavTak::2008::Single-pixel-imaging, SlaviLDB_Compressive,BajwaHRWN_Toeplitz}; \emph{b}) proposing more refined classes of highly structured signals~\cite{Duarte_spectral,HegdeDC_Compressive,BaranCDH_Model}; and  \emph{c}) providing reconstruction guarantees and fast solvers for BPDN and other convex programs \cite{HaleYZ_Fixed,YinOGD_Bregman,FigueNW_Gradient,BergF_Probing} as well as greedy and first-order algorithms~\cite{cosamp,BluDav::2008::Iterative-hard,DonohMM_Message}.

CS promises to lessen our sampling burden.  The simple consequence of (\ref{eq:ypx}) is that, when the acquisition of each measurement is ``expensive,'' we benefit by sensing only $M$ values rather than $N$.  One example of such a situation is magnetic resonance imaging (MRI)~\cite{LustiDP_Rapid}.  We seek to minimize the amount of time to image a patient; however, each measurement is time-consuming, leading to a total acquisition time that is currently on the order of tens of minutes.  Another example is sampling, where the required Nyquist rate for wideband signals may be prohibitively highy~\cite{Hea05:Analog-to-Information,TroLasDua::2009::Beyond-Nyquist:}.  It is possible to design a physical sampling system $\bar{\Phi}$ such that $\bs y = \Phi\bs x = \bar{\Phi}(x(t))$ where $\bs x$ is a vector of Nyquist-rate samples of a bandlimited signal $x(t)$, $t \in\mathbb{R}$.  In this case, (\ref{eq:ypx}) translates to low, sub-Nyquist sampling rates, a potential boon for wideband acquisition.

In practice, three issues may arise during signal acquisition that are not modeled by (\ref{eq:ypx}).  First, the real-valued CS measurements will be mapped to discrete bits via a quantizer.  Second, there may be noise present on the input signal. Third, we often must limit the total number of measured bits $\mathfrak{B}$, i.e., we are constrained by a bit-budget when transmitting or storing the measurements.
Thus, a more precise model of CS acquisition is 
\begin{equation}
\label{eq:bettermodel}
\bs y_{Q} = \mathcal{Q}_{B}(\Phi(\bs x+ \bs n) + \bs e),
\end{equation}
where the \emph{signal noise} is denoted by $\bs n\in \mathbb{R}^{N}$, and $\mathcal{Q}_{B}: \mathbb{R} \rightarrow \mathfrak{A}$ is a $B$-bit scalar quantization function (applied element-wise in (\ref{eq:bettermodel})) that maps real-valued CS measurements to the discrete alphabet $\mathfrak{A}$ with $|\mathfrak{A}| = 2^{B}$. In this paper we will model $\bs n$ as a random vector with each element having variance $\sigma_{\bs n}^{2}$.  Since the primary source of measurement noise in a well-designed hardware system derives from quantization, we will assume $\|\bs e\|_{2}=0$.\footnote{The general trends presented in this paper remain unchanged when $\bs \|e\|_{2} > 0$.}  Since the quantizer is scalar, we can write the bit-budget constraint as 
\begin{equation}
\mathfrak{B} = MB.
\end{equation}
Although we will focus on scalar quantization in this paper, alternative quantization techniques such as sigma-delta~\cite{GunPowSaa::2010::Sobolev-Duals} or non-monotonic scalar quantization~\cite{Boufounos::2010::univer_rate_effic_scalar_quant} have also been proposed for CS systems,  as have many algorithms specialized to CS quantization problems~\cite{bib:GPSY_SD10,ZymBoyCan::2009::Compressed-sensing,SunGoy::2009::Quantization-for-compressed,SunGoy::20090::Optimal-quantization,vivekQuantFrame,LasBouDav::2009::Demcracy-in-action:,JacquHF_Dequantizing}.  The main themes presented here will be generally applicable to these techniques and algorithms as well.

The fixed bit-budget $\mathfrak{B}  = MB$ and the signal noise $\bs n$ impose a competing performance tradeoff as a function of $M$.  On the one hand, since $B = \mathfrak{B}/M$, we can increase the bit-depth as we decrease the number of measurements, thereby increasing the precision of each measurement.  On the other hand, signal noise is amplified due to \emph{noise folding} as we decrease the number of measurements, thereby decreasing the precision of each measurement~\cite{CanDav::2011::How-well-can-we-estimate}.\footnote{Roughly speaking, noise folding implies that during reconstruction we lose about $3$dB of signal-to-noise ratio (SNR) as the number of measurements is halved~\cite{DASP,DavLasTre::2011::The-pros-and-cons}.}  Thus, we find ourselves in somewhat of a conundrum: as we take fewer measurements we can allocate more bits per measurement (good), but noise folding increases the risk of wasting these bits on already imprecise measurements (bad).

We can gain more insight into this conundrum through a back-of-the-envelope calculation of the optimal total acquisition error, which comprises the expected mean-squared distortion due to a scalar quantizer for Gaussian measurements $O(\|x\|_{2}^{2}2^{-2B})$ and the expected reconstruction error due to measurement noise $O\left(\frac{N}{M}\sigma_{\bs n}^{2}\right)$.  Equating these noise levels to minimize the total mean square error (MSE) leads to
\begin{equation}
\label{eq:env}
B \approx \frac{1}{2}\log_{2}\left( \frac{\|\bs x\|_{2}^{2}}{\sigma_{\bs n}^{2}} \frac{M}{N}\right).\nonumber
\end{equation}
This expression can also be found using classical rate-distortion bounds in terms of the signal-to-noise ratio (SNR)~\cite{CoverT_Elements,SarvoBB_Measurements}.  Imposing the fixed bit-budget $B = \mathfrak{B}/M$ and rearranging terms, we find that the MSE is minimized when
\begin{equation}
\label{eq:env2}
\log_{2}\left(\frac{\|\bs x\|_{2}^{2}}{N\sigma_{\bs n}^{2}}\right) \approx \frac{2\mathfrak{B}}{M} - \log_{2}\left(M \right).
\end{equation}
The term on the left is the logarithm of the SNR of the input signal.  
For fixed $\mathfrak{B}$ and $N$, (\ref{eq:env2}) implies that there are two operational regimes that correspond roughly to ``high'' input SNR and ``low'' input SNR.  At high input SNR, the MSE is minimized by taking a small number of measurements $M$ with large bit-depth; we call this the  \emph{measurement compression} (MC) regime.  At low input SNR, the MSE is minimized by taking a large number of measurements $M$ with small bit-depth; we call this the \emph{quantization compression} (QC) regime.  The exact SNR at which the transition between the two regimes occurs is a function of the total bit-budget.  A primary contribution of this paper is to expose and explore the QC regime.

In this paper we argue for the distinction between the MC and QC regimes in two ways.  First, we formalize the back-of-the-envelope calculation in (\ref{eq:env}) by analyzing the reconstruction MSE that results from the combined effects of quantization and signal noise folding.  Specifically we provide an upper bound on this MSE for an optimal non-uniform scalar quantizer that roughly predicts the trends of the optimal bit-depth for different signal noise powers and bit-budgets.  Second,  we provide a suite of simulations for a specific setup frequently encountered in practice:  the acquisition of sparse signals from uniformly quantized measurements.  Surprisingly, at certain practical SNRs, our simulations suggest that a $1$-bit quantizer (using the reconstruction techniques developed in \cite{JacLasBou::2011::Robust-1-bit}) exhibits better performance than larger bit-depth quantizers.

Revisiting the example CS applications from above, a CS MRI device should aim to operate in the MC regime, since the total data acquisition time is proportional to $M$.  In this case, (\ref{eq:env}) recommends acquiring high SNR measurements and quantizing them finely.   In contrast, a low SNR wideband sampling system should aim to operate in the QC regime.  In this case, (\ref{eq:env}) recommends acquiring low SNR measurements and quantizing them coarsely.  Fortunately, by some divine Providence, sampling rate and bit-depth enjoy an inverse relationship in practical ADCs;  specifically, we obtain an exponential increase in sampling rate as the bit-depth is decreased~\cite{LeRonRee::2005::Analog-to-Digital-Converters}.   Taking this idea to its logical extreme, it has been shown that it is possible to drive the bit-depth down to $1$ bit per CS measurement and still guarantee stable signal recovery~\cite{BouBar::2008::1-Bit-compressive,LasWenYin::2010::Trust-but-verify:,Bou::2009::Greedy-sparse,JacLasBou::2011::Robust-1-bit}. In this case the quantizer is simply a comparator, enabling an extremely high sampling rate.

The remainder of this paper is organized as follows.  In Section~\ref{sec:background}, we provide the necessary CS background for our analysis and simulations.  In Section~\ref{sec:anal}, we develop a bound on the reconstruction error due to quantization and signal noise, expressed in terms of a fixed bit-budget.  In Section~\ref{sec:sims}, we present a series of numerical simulations that further support our argument.  We conclude in Section~\ref{sec:disc} with a discussion on the implications of this work.


\section{Background}
\label{sec:background}
\subsection{CS Toolkit}\label{sec:toolkit}
Before examining the effect of noise and quantization on CS reconstruction performance, we first review a few key results and definitions that enable our analysis.

CS reconstruction can be interpreted as consisting of two steps:  first finding the non-zero coefficient locations (the support) and then estimating the coefficient values.  If we can correctly identify the true signal support, then the optimal linear estimate for coefficient values can be computed via least squares:
\begin{equation}
\label{eq:oracle}
\widehat{\bs x}|_{\Omega} =  \Phi_{\Omega}^{\dagger}\bs y, \quad
\widehat{\bs x}|_{\Omega^C} =  \bs 0,
\end{equation}
where $ \Phi_{\Omega}$ denotes the submatrix of $\Phi$ formed by selecting the columns of $\Phi$ according to the index set $\Omega$, $\widehat{\bs x}|_{\Omega}$ is the corresponding subvector  of $\widehat{x}$, $\Omega^{C}$ is the complement set to $\Omega$, and $\dagger$ denotes the Moore-Penrose pseudo-inverse.  Indeed, if an oracle were to provide the true support $\Omega$, then no linear CS reconstruction algorithm can perform better than (\ref{eq:oracle}). Thus, reconstruction with known signal support is sometimes called \emph{oracle-assisted} reconstruction~\cite{CandeT_Dantzig,DavLasTre::2011::The-pros-and-cons}.  Our analysis will be primarily in terms of the performance of this best-case reconstruction algorithm.
Furthermore, from \cite{JacLasBou::2011::Robust-1-bit,Boufounos::2010::univer_rate_effic_scalar_quant}, when there is no noise on the measurements, the
reconstruction (\ref{eq:oracle}) is also consistent, meaning that
$$\cl
Q_B(\Phi \bs{\hat x}) = \cl Q_B(\Phi_\Omega \bs{\hat
x}|_{\Omega})=\cl Q_B(\Phi_\Omega\Phi^\dagger_\Omega \bs y) = \cl
Q(\bs y) = \bs y.
$$
There is no better nonlinear estimator for the quantized measurements than a consistent estimator.

Robust reconstruction guarantees will only hold for measurement systems $\Phi$ that are ``well-conditioned.''   For instance, the so-called \emph{restricted isometry property} (RIP) of a matrix $\Phi$ has been shown to be a sufficient condition for the robust recovery of sparse signals via several algorithms~\cite{CandesDLP,cosamp}.  The RIP of order $K$ with constant $\delta$ is defined as 
\begin{equation}
(1-\delta)\|\bs x\|_{2}^{2} \leq \|\Phi \bs x\|_{2}^{2} \leq (1+\delta)\|\bs x\|_{2}^{2},
\end{equation}
for all $\bs x \in \Sigma_{K}$.   Roughly speaking the RIP ensures that the norm of the measurements is close to the norm of the signal for all $K$-sparse signals.  An alternative way of thinking of this is that the singular values of any submatrix formed by $K$ or fewer columns of $\Phi$ are bounded close to $1$; hence any $K$-column submatrix of $\Phi$ is close to an isometry.

The RIP ensures stable oracle-assisted recovery when white noise is added to the measurements.  Specifically, suppose that $\bs z = \Phi \bs x - \bs y$, where $\bs z$ is a zero-mean random vector with uncorrelated (white) entries, each having variance $\sigma_{\bs z}^{2}$.  Furthermore suppose that $\Phi$ has the RIP of order $K$, and that $\bs x$ is $K$-sparse.  Then Theorem 4.1 of \cite{DavLasTre::2011::The-pros-and-cons} demonstrates that oracle-assisted reconstruction will have expected error 
\begin{equation}
\label{eq:oraclerecon}
\frac{K\sigma_{\bs z}^{2}}{1+\delta} \leq \mathbb{E}(\|\bs x - \widehat{\bs x}\|_{2}^{2}) \leq \frac{K\sigma_{\bs z}^{2}}{1-\delta}.
\end{equation}
A key component of our analysis below will be understanding the variance of the noise term $\bs z$ that arises from the quantized noisy measurements $\bs y_{Q}$.  The expression (\ref{eq:oraclerecon}) then gives the intuition that the expected reconstruction error behaves on the order of the variance of the error per measurement $\sigma_{\bs z}^{2}$.

We will also make use of a result that relates the variance
$\sigma_{\bs n}^{2}$ of the signal noise to the variance of the
measured noise $\sigma_{\Phi \bs n}^{2}$.  If $\bs n$ is white with mean zero and
variance $\sigma_{\bs n}^{2}$, and $\Phi$ has orthonormal rows, i.e., $\Phi\Phi^{T} = \frac{N}{M}\bs I_{M}$,\footnote{The so-called \emph{tight frame} condition $\Phi\Phi^{T} = \frac{N}{M}\bs I_{M}$ is not overly restrictive, since for any RIP matrix $\Gamma$, a matrix that has both the same row-space as $\Gamma$ and the tight frame condition can be derived from $\Gamma$~\cite{DavLasTre::2011::The-pros-and-cons}.}
then it is straightforward to show
that the measured noise is also white and zero mean and has variance
\begin{equation}
\label{eq:noisefold}
\sigma_{\Phi \bs n}^{2} = \frac{N}{M}\sigma_{\bs n}^{2}.
\end{equation}
Note that the measured noise is only uncorrelated (i.e., white) when $M\leq N$;  indeed, the condition $\Phi\Phi^{T} = \frac{N}{M}\bs I_{M}$ can only hold when $M\leq N$.

In \cite{DavLasTre::2011::The-pros-and-cons}, the authors
combine the results of (\ref{eq:oraclerecon}) and (\ref{eq:noisefold}) to obtain a bound on the
oracle-assisted reconstruction error due to noise folding.  
We will take a similar approach, however we will additionally include the effects of quantization.  Furthermore, because our quantization error is not necessarily uncorrelated, we first generalize (\ref{eq:oraclerecon}) to obtain an upper bound on the oracle reconstruction error with uncorrelated measurement noise.

\subsection{$1$-bit CS}
The results of the conventional CS framework above will enable us to analyze scalar quantized measurements when the bit-depth is greater than $1$.
However, CS measurements can be coarsely quantized to just $1$ bit, representing their signs. 
These facts preclude $1$-bit CS from being analyzed within the
conventional linear CS framework.
 Even though meaningful theoretical comparisons are difficult to make between $1$-bit and conventional CS, it is beneficial to compare their empirical performances, since both types of CS can be useful in practice.
Thus, we very briefly review the key
results of the $1$-bit CS
framework~\cite{BouBar::2008::1-Bit-compressive,JacLasBou::2011::Robust-1-bit}.
Formally, $1$-bit measurements can be written as
\begin{equation}
\label{eq:defh}
\bs y_{s} = A(\bs x) := \sign (\Phi \bs x).
\end{equation}
To reconstruct, we search for a sparse, unit-norm signal $\bs{\hat x}$
that is \emph{consistent} with the measurements, meaning that $A(\bs{\hat x}) =
A(\bs x)$.  We restrict our attention to unit-norm signals, since the
scale of the signal is lost during the $1$-bit quantization
process.
This problem is generally non-convex, and thus it is difficult to
design an algorithm that will be guaranteed to find the desired
solution. Nonetheless several algorithms have been proposed to
approximately solve this
problem~\cite{BouBar::2008::1-Bit-compressive,Bou::2009::Greedy-sparse,LasWenYin::2010::Trust-but-verify:,JacLasBou::2011::Robust-1-bit}; convex programs have also been formulated~\cite{PlaVer::2011::One-bit-compressed}.

In much the same way that the RIP of $\Phi$ guarantees stable reconstruction from
$\ell_{1}$-minimization programs~\cite{CandeT_Decoding}, the so-called
\emph{binary $\epsilon$-stable embedding} (B$\epsilon$SE) provides a
similar robustness for the mapping $A$ with consistent
algorithms~\cite{JacLasBou::2011::Robust-1-bit}.  The property
explains that the normalized Hamming distance between any two sets of
measurements is within $\epsilon$ of the normalized angular distance
between the original signals, for all unit-norm $K$-sparse signals.
It can be shown that, if the elements of $\Phi$ are drawn from a
Gaussian distribution, then $\Phi$ satisfies the B$\epsilon$SE with
high probability, and thus CS systems that enable $1$-bit
quantized measurements exist.

Our simulations will make use of two $1$-bit CS algorithms originally introduced in~\cite{JacLasBou::2011::Robust-1-bit}.  Specifically, we will employ the BIHT and BIHT-$\ell_{2}$ algorithms.  The former can be thought of as minimizing a one-sided $\ell_{1}$-norm and imposing a sparse unit-norm signal model, while the latter can be thought of as minimizing a one-sided $\ell_{2}$-norm instead.  By one-sided norm, we mean that the positive elements of a vector are set to zero before the norm is computed. The BIHT algorithm has been shown to perform better in low noise scenarios, while the BIHT-$\ell_{2}$ algorithm has been shown to perform better in high noise scenarios~\cite{JacLasBou::2011::Robust-1-bit}.



\section{Analysis of Quantized CS Systems with Signal Noise}
\label{sec:anal}

In this section we derive a new upper bound on the oracle-assisted reconstruction error due to both noise and quantization, making the back of the envelope calculation (\ref{eq:env}) more rigorous.
This bound enables us to argue that, for a fixed bit-budget $\mathfrak{B} = MB$, it may be better to quantize to fewer bits per measurement $B$ than take fewer measurements $M$. The following theorem is proved in Appendix~\ref{apx:proof}.

\begin{theorem}
\label{thm:reconerrorbound}
Suppose that $\bs y_{Q} =  \mathcal{Q}_{B}(\Phi(\bs x + \bs n))$.  Let the signal $\bs x \in \mathbb{R}^{N}$ be sparse with support $\Omega \in \{1, \ldots, N\}$ and $|\Omega|=K$, where the elements $\Omega$ are chosen uniformly at random and the amplitudes of the non-zero coefficients are drawn according to $x_{j}\in \Omega \sim \mathcal{N}(0,\sigma_{\bs x}^{2})$.  Let the signal noise $\bs n \in \mathbb{R}^{M}$ be a random, white, zero-mean vector with variance $\sigma_{\bs n}^{2}$.  
Furthermore, let the $M\times N$ matrix $\Phi$ satisfy the RIP of order $K$ with constant $\delta$, $\Phi\Phi^{T} = \frac{N}{M}\bs I_{M}$, and $M<N$.  Choose $\mathcal{Q}_{B}$ to be the optimal scalar quantizer with $B>1$ that minimizes the MSE for the distribution of the measurements  $\Phi(\bs x + \bs n)$.
Then for a fixed bit-budget of $\mathfrak{B} = MB$,  the MSE of the oracle-assisted reconstruction estimate $\widehat{\bs x}$ satisfies
\begin{equation}
\label{eq:errbound}
\mathbb{E}\left(\| \bs x - \widehat{\bs x}\|_{2}^{2}\right) \leq \frac{2K}{\mathfrak{B}(1-\delta)}\left( K\sigma_{\bs x}^{2}B2^{-2B}  + N\sigma_{\bs n}^{2} B\left(1+2^{-2B}\right) \right) + \frac{K}{(1-\delta)} \left(\frac{\mathfrak{B}}{B}-1\right) \mathfrak{S},
\end{equation}
where $\mathfrak{S} = \max_{i\neq j}|\mathbb{E}(\mathcal{Q}_{B}(\Phi \bs x + \Phi \bs n)_{i}\mathcal{Q}_{B}(\Phi \bs x + \Phi \bs n)_{j})|$ is the correlation between the quantized measurements.
\end{theorem}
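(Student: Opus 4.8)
The plan is to write the oracle reconstruction error as a trace and split it into a per-measurement ``diagonal'' part and a cross-correlation ``off-diagonal'' part. Write $\bs m := \Phi(\bs x+\bs n)$, let $\bs q := \mathcal Q_B(\bs m)-\bs m$ be the quantization error, and set $\bs z := \Phi\bs x-\bs y_Q = -\Phi\bs n-\bs q$. From (\ref{eq:oracle}), the oracle estimate satisfies $\widehat{\bs x}|_{\Omega}-\bs x|_{\Omega} = \Phi_\Omega^{\dagger}\bs y_Q-\bs x|_\Omega = -\Phi_\Omega^{\dagger}\bs z$ and $\widehat{\bs x}|_{\Omega^C}=\bs x|_{\Omega^C}=\bs 0$, so with $A := (\Phi_\Omega^{\dagger})^{T}\Phi_\Omega^{\dagger}\succeq 0$ one has, for fixed $\Phi$ and $\Omega$, $\E[\norm{\bs x-\widehat{\bs x}}_2^2 \mid \Omega] = \mathrm{tr}(A\,\E[\bs z\bs z^{T}\mid\Omega])$.

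\textbf{Step 1: a generalized oracle bound.} Since $\bs q$ need not be white, I would first extend (\ref{eq:oraclerecon}). Split $\E[\bs z\bs z^T\mid\Omega]=D+E$ into its diagonal and off-diagonal parts. Because $A\succeq 0$ has at most $K$ nonzero eigenvalues, and the RIP of order $K$ forces each (they are the $\sigma_k(\Phi_\Omega)^{-2}$) to be at most $1/(1-\delta)$, we get $\mathrm{tr}(A)=\norm{\Phi_\Omega^{\dagger}}_F^2\le K/(1-\delta)$; hence $\mathrm{tr}(AD)=\sum_i A_{ii}\,\E z_i^2 \le (\max_i \E z_i^2)\,\mathrm{tr}(A)\le \tfrac{K}{1-\delta}\max_i\E z_i^2$, which recovers the upper half of (\ref{eq:oraclerecon}). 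For the off-diagonal part, trace duality gives $|\mathrm{tr}(AE)|\le \norm{A}_{*}\norm{E}_{\mathrm{op}}$, where $\norm{A}_{*}=\mathrm{tr}(A)\le K/(1-\delta)$, and, $E$ being symmetric with zero diagonal and entries of magnitude at most $\mathfrak S_{z}:=\max_{i\ne j}|\E z_iz_j|$, the maximum-row-sum bound gives $\norm{E}_{\mathrm{op}}\le (M-1)\mathfrak S_{z}$. Altogether $\E[\norm{\bs x-\widehat{\bs x}}_2^2\mid\Omega]\le \tfrac{K}{1-\delta}\big(\max_i\E z_i^2 + (M-1)\mathfrak S_{z}\big)$, uniformly in $\Omega$.

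\textbf{Step 2: the variance and the correlation.} For the diagonal term, $z_i=-(\Phi\bs n)_i-q_i$ and $(a+b)^2\le 2a^2+2b^2$ give $\E z_i^2\le 2\E(\Phi\bs n)_i^2 + 2\E q_i^2$; by (\ref{eq:noisefold}) and $\Phi\Phi^T=\tfrac NM\bs I_M$ the first term equals $2\tfrac NM\sigma_{\bs n}^2$. For the second, averaging over the uniformly random support and using the tight-frame identity yields $\sum_i \E m_i^2 = \E\norm{\Phi\bs x}_2^2 + \E\norm{\Phi\bs n}_2^2 = K\sigma_{\bs x}^2 + N\sigma_{\bs n}^2$, so the measurements have per-coordinate variance $\sigma_m^2 = (K\sigma_{\bs x}^2 + N\sigma_{\bs n}^2)/M$ and the optimal $B$-bit scalar quantizer has distortion at most $\sigma_m^2 2^{-2B}$ (the rate--distortion value, cf.\ the discussion around (\ref{eq:env})); hence $\E q_i^2\le (K\sigma_{\bs x}^2+N\sigma_{\bs n}^2)2^{-2B}/M$ and $\max_i\E z_i^2\le \tfrac2M\big(N\sigma_{\bs n}^2 + (K\sigma_{\bs x}^2+N\sigma_{\bs n}^2)2^{-2B}\big)$. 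For the correlation term, the folded noise $\Phi\bs n$ is white (and, with $\bs n$ Gaussian, independent across measurements), which for $i\ne j$ kills $\E[(\Phi\bs n)_i(\Phi\bs n)_j]$, $\E[(\Phi\bs n)_iq_j]$ and $\E[q_i(\Phi\bs n)_j]$, so $\E z_iz_j=\E q_iq_j$; expanding $\bs q=\bs y_Q-\bs m$ and using $\E_\Omega(\Phi_\Omega\Phi_\Omega^T)_{ij}=\tfrac KN(\Phi\Phi^T)_{ij}=0$ together with the centroid property of the optimal quantizer to discard the remaining mixed expectations gives $\mathfrak S_z\le \mathfrak S$. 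Substituting both bounds into Step 1, averaging over $\Omega$, and finally replacing $1/M$ by $B/\mathfrak B$ and $M-1$ by $\mathfrak B/B-1$ produces exactly (\ref{eq:errbound}).

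\textbf{Main obstacle.} The trace split and the RIP/noise-folding substitutions are routine; the delicate part will be the correlation term. Reducing the off-diagonal entries of $\E[\bs z\bs z^T]$ to the correlation $\mathfrak S$ of the \emph{quantized} measurements requires cancelling several mixed expectations using the tight-frame identity $\Phi\Phi^T=\tfrac NM\bs I_M$ (distinct raw measurements are uncorrelated only after averaging over the random support) and the oddness and centroid properties of the optimal quantizer, and one must check that the conditional-on-$\Omega$ bounds survive the outer expectation over $\Omega$. A secondary point is invoking the clean $\sigma_m^2 2^{-2B}$ bound for the optimal scalar quantizer rather than the high-rate constant, which is where the rate--distortion estimate enters.
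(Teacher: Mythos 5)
Your proposal is correct and follows essentially the same route as the paper's proof: a generalized oracle bound for correlated measurement error (your trace-splitting of $\mathrm{tr}(A\,\E[\bs z\bs z^T])$ into diagonal and off-diagonal parts is numerically identical to the paper's combination of its Lemma on correlated noise with a Gershgorin row-sum bound, yielding $\tfrac{K}{1-\delta}(\max_i\E z_i^2+(M-1)\mathfrak S)$), the same $(a+b)^2\le 2a^2+2b^2$ split of $z_i$ into folded noise plus quantization error, the same optimal-quantizer distortion $\sigma^22^{-2B}$ with $\sigma^2=\tfrac{K}{M}\sigma_{\bs x}^2+\tfrac{N}{M}\sigma_{\bs n}^2$, the same reduction of the off-diagonal entries to $\mathfrak S$ via the optimal quantizer's orthogonality properties, and the same substitution $M=\mathfrak B/B$. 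No substantive differences to report.
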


Each component of the bound  (\ref{eq:errbound})  is fairly intuitive.  The term $K\sigma_{\bs x}^{2}B2^{-2B}$ reflects the error due to quantizing the measurements.    The term $N\sigma_{\bs n}^{2} B\left(2^{-2B} + 1\right)$  reflects both the error due to measured signal noise as well as the quantization of that noise. The reconstruction error is effectively proportional to these two terms.   The final term $\left(\frac{\mathfrak{B}}{B}-1\right)\mathfrak{S}$ reflects an additional error due to the correlation between the quantized measurements.  In many CS scenarios we expect this term to be close to zero, and furthermore for large $B$ it has been shown that this term can be accurately approximated as zero~\cite{GraNeu::1998::Quantization}.  Thus, choosing the optimal $B$ primarily comes down to balancing the terms inside the parentheses.

The bound in (\ref{eq:errbound}) applies to strictly sparse signals immersed in signal noise.  However, it may also be of interest to consider so-called \emph{compressible signals}, i.e., signals that are not strictly sparse but that can be reasonably approximated by retaining their $K$ largest magnitude coefficients. For such signals, the ``tail'' part of the signal that we do no expect to recover, i.e., the subset of the smallest $N-K$ entries, is also subject to noise folding. Theorem~\ref{thm:reconerrorbound} can be extended to handle compressible signals by inflating the second term to account for the additional correlation between the quantized measurements. The general performance trends will be similar to sparse signals in noise; i.e., signals that are ``less compressible'' will induce the same regime as signals with low input SNR.

The bound in (\ref{eq:errbound}) is pessimistic, since we do not take into account the benefits accrued by increasing the number of measurements, for instance by improving the RIP constants of $\Phi$.  
Furthermore, when the quantization error is large enough to dominate the measurement noise, the measurement noise terms may not play an active role in the true behavior of the system.
Again, this is not reflected by the bound.  Finally, the bound does not apply to $1$-bit quantization or the case where $M > N$.

\begin{figure*}[!t] 
   \centering
   \begin{tabular}{cc}
   \includegraphics[width=.84\imgwidth]{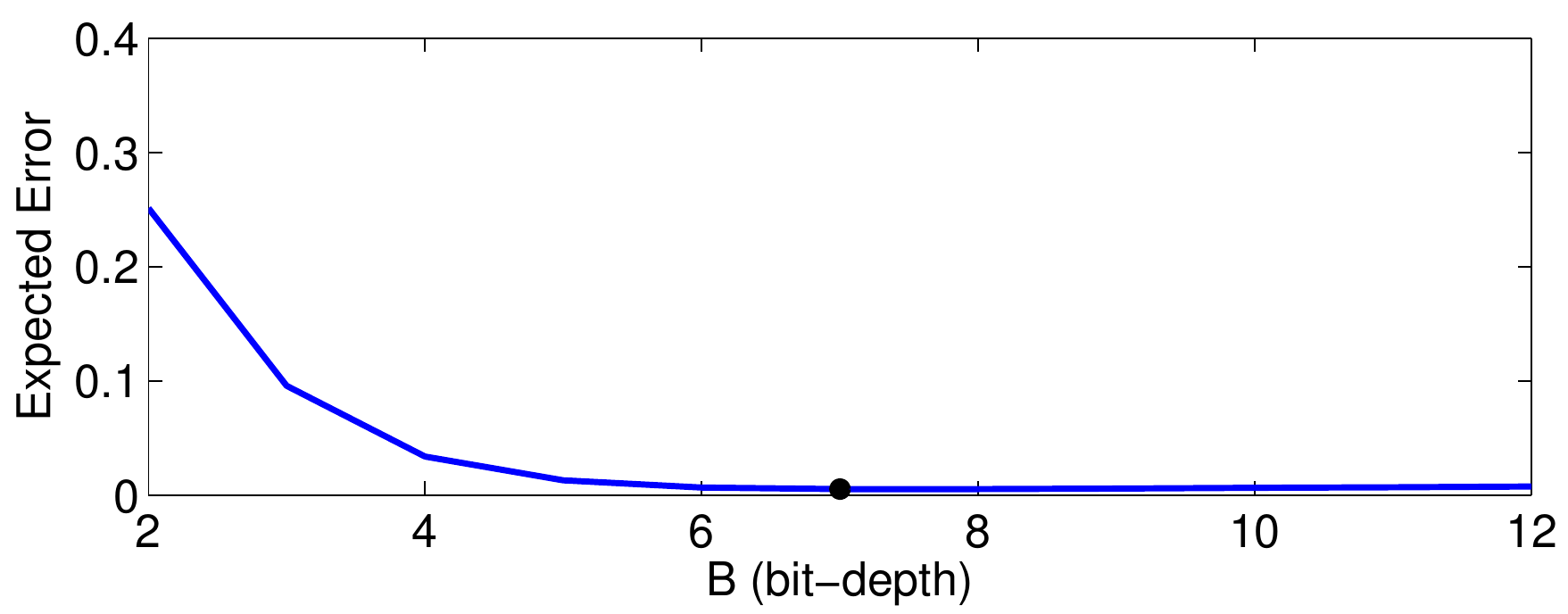}&
       \includegraphics[width=.84\imgwidth]{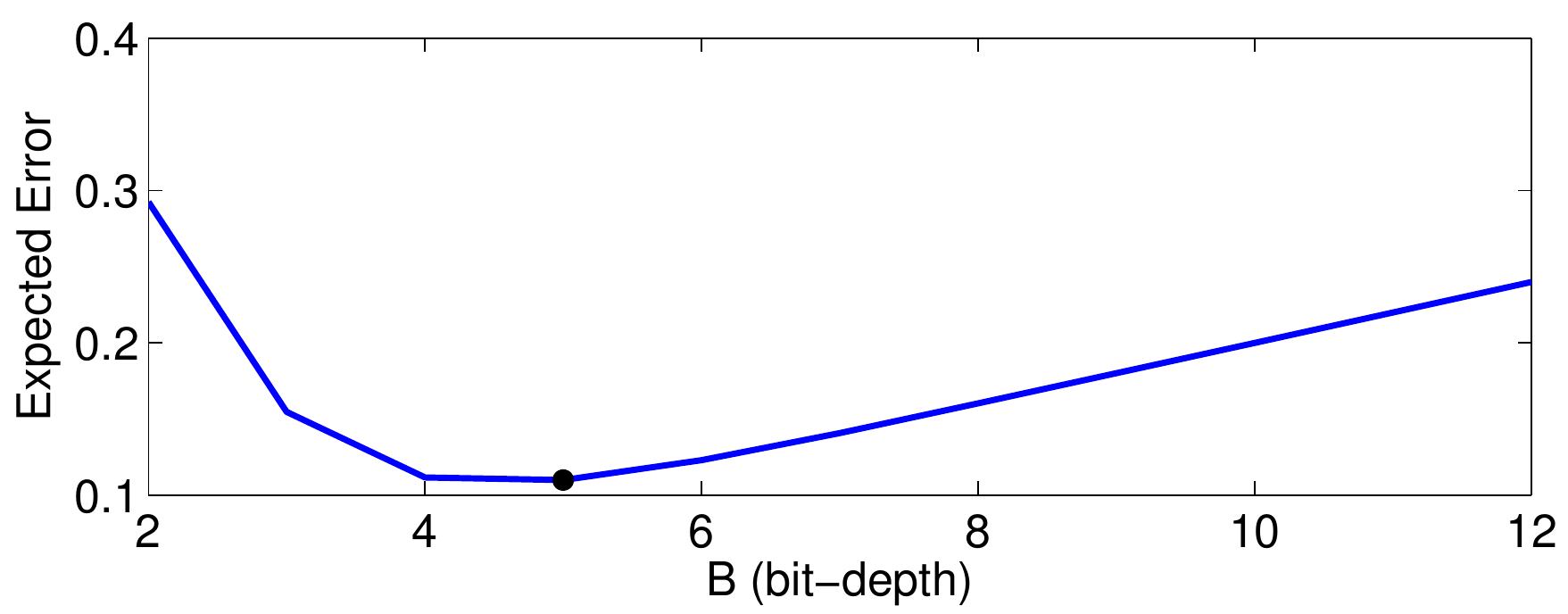}\\
       (a) \small{$\mathrm{ISNR} = 35$dB, optimal bit-depth $= 7$}&(b) \small{$\mathrm{ISNR} = 20$dB, optimal bit-depth $= 5$}\\
          \includegraphics[width=.9\imgwidth]{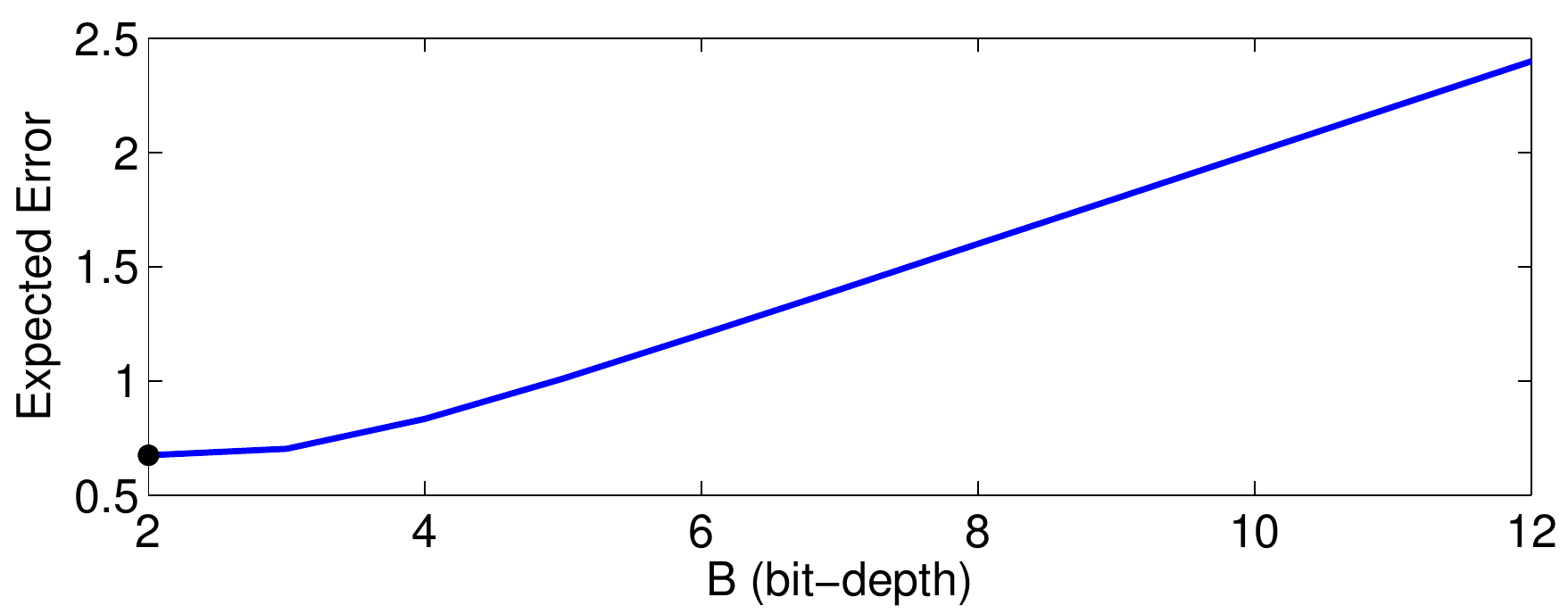}&
   \includegraphics[width=.87\imgwidth]{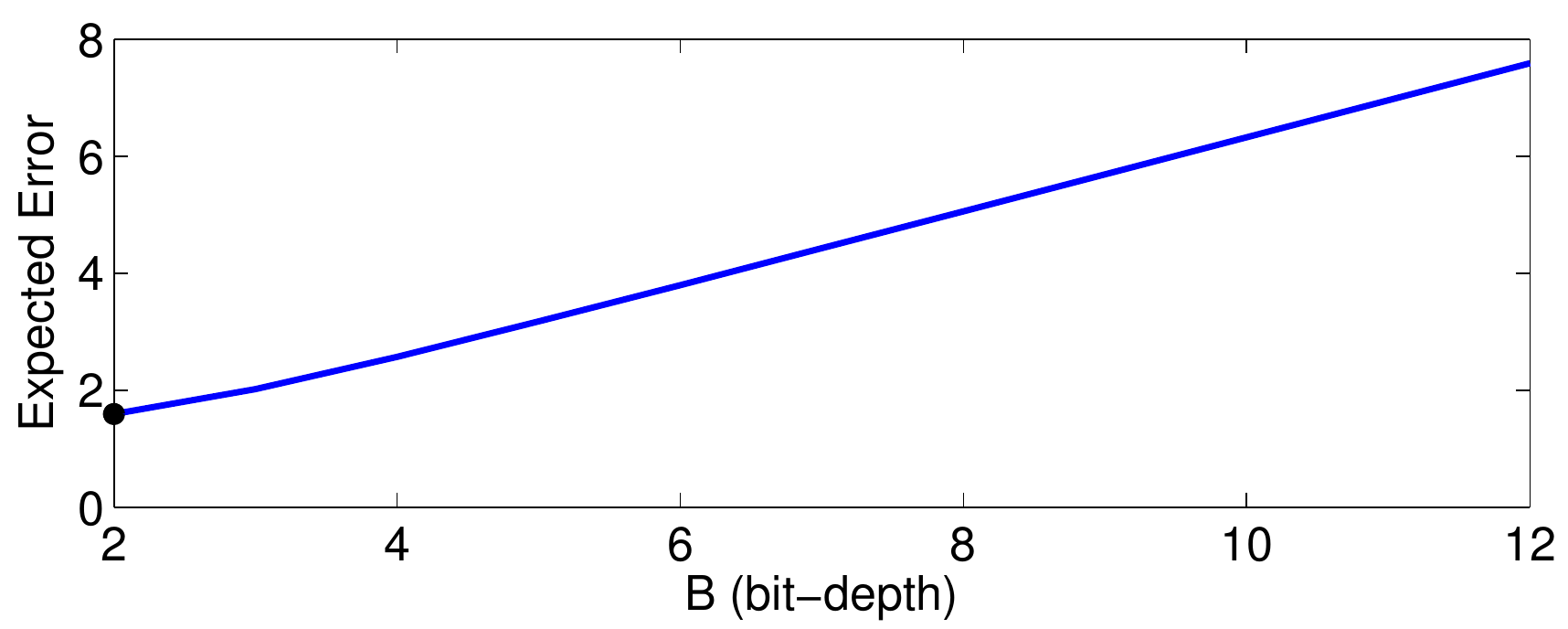}\\
	(c) \small{$\mathrm{ISNR} = 10$dB, optimal bit-depth $= 2$}&(d)\small{ $\mathrm{ISNR} = 5$dB, optimal bit-depth $= 2$}\\
   \end{tabular}
   \caption{Upper bound on the oracle-assisted reconstruction error as a function of bit-depth $B$ and $\mathrm{ISNR}$.  The term inside the parenthesis in the bound (\ref{eq:errbound}) was computed.  
Black dots denote the minimum point on each curve. 
   }
   \label{fig:bound}
\end{figure*}
To use the bound (\ref{eq:errbound}) to support our argument that there are both MC and QC regimes in CS,  we examine the behavior of the oracle-assisted reconstruction error as a function of the bit-depth $B$ (or equivalently the number of measurements $M$ since $\mathfrak{B} = MB$).  
Since the solution for the optimal $B$ cannot be computed in closed form without resorting to tabulated functions, we evaluate the bound over some interesting parameters.  The evaluation of the bound is depicted in Figure~\ref{fig:bound}, where plots (a)--(d) correspond to input signal-to-noise ratios (ISNRs) of $35$dB, $20$dB, $10$dB, and $5$dB, respectively.  We define the \emph{input SNR} (ISNR) in dB as 
\begin{equation}
\label{eq:isnr}
\mathrm{ISNR} := 10\log_{10}\left(\frac{\mathbb{E}(\|\bs x\|_{2}^{2})}{\mathbb{E}(\|\bs n\|_{2}^{2})}\right).
\end{equation}
where $\mathbb{E}(\|\bs x\|_{2}^{2}) = K\sigma_{\bs x}^{2}$ and $\mathbb{E}(\|\bs n\|_{2}^{2}) = N\sigma_{\bs n}^{2}$.  

Since we are primarily concerned with the performance trend of (\ref{eq:errbound}) as a function of $B$ and the ISNR, we make a few simplifications when plotting the bound.  First, we only evaluate the term inside the parenthesis; this term is proportional to the error on the measurements and does not depend on the RIP constant, the sparsity $K$, or the correlation between the quantization errors.  Second, by only evaluating the term inside the parenthesis in (\ref{eq:errbound}), we do not take into account the effect of $M$ on the RIP constants ($\delta$ decreases as $M$ increases).    
 The minimum error point in each curve is denoted by a solid black dot.  

The message from Figure~\ref{fig:bound} is clear.  The tradeoff between the number of measurements $M$ and bit-depth $B$ empirically follows a convex curve, i.e., the error not only increases when $B$ is too small, but the error also increases when $B$ is too large.  In other words, more bits per measurement is not necessarily optimal.   Furthermore,  as expected, the minimum reconstruction error occurs for smaller $B$ as the ISNR decreases.  For the high ISNR of $35$dB, the bound is minimized at a bit-depth of approximately $7$ bits per measurement.  The is an example of the MC regime, where larger bit-depths and thus lower $M$ yield the best performance.
For the low ISNR of $10$dB, the bound is minimized at a bit-depth of approximately $2$ bits per measurement.  
This is an example of the QC regime, where larger bit-depths and thus higher $M$ yield the best performance.


\section{Experiments}
\label{sec:sims}
In the previous section we have argued that the QC regime exists by deriving an upper bound on the oracle-assisted reconstruction error.  In this section we perform a suite of simulations to empirically study for which input noise levels and bit-budgets this regime will occur in practical systems.
Specifically our simulations \emph{i}) validate the theoretical result in Theorem~1, \emph{ii}) demonstrate the performance achieved in practice when combining quantization and signal noise, and finally \emph{iii}) prove the existence of the QC regime.  A surprising additional result emerges from the simulations: when nontrivial signal noise is present, $1$-bit CS systems perform competitively with, if not better than conventional CS with uniform multibit quantization.

\subsection{Setup}
Our simulations were performed using canonically (identity) sparse signals $\bs x$.\footnote{The results of simulations did not change when the signals were DCT-sparse.}  The signals were measured with i.i.d.\ Gaussian matrices, i.e., $\bs y = \Phi(\bs x + \bs n)$ where the matrix $\Phi$ has elements $\phi_{i,j} \overset{\mathrm{i.i.d.}}{\sim} \mathcal{N}(0,1/M)$.   The measurements were quantized uniformly with quantization interval $\Delta = T2^{-B+1}$, where $T$ is the dynamic range of the quantizer.  In all simulations, we chose $T = \|\Phi \bs x\|_{\infty}$ to maximize the range of the quantizer and ensure that for any noiseless measurement $|(\Phi\bs x)_{i} - \mathcal{Q}_{B}((\Phi \bs x)_{i})| \leq \Delta/2$.

In each trial we drew a new $M\times N$ sensing matrix $\Phi$ and a new signal $\bs x$.  The non-zero coefficients of $\bs x$ were chosen according to a Gaussian distribution, and their positions were chosen at random.  We additionally added Gaussian noise to $\bs x$ to obtain the desired $\mathrm{ISNR}$.  For $B>1$, reconstruction of the estimate $\widehat{\bs x}$ was performed using the oracle-assisted reconstruction algorithm (\ref{eq:oraclerecon}) for Section~\ref{sec:oracle} and BPDN (\ref{eq:BPDN}) with an oracle value of $\epsilon = \|\bs y - \mathcal{Q}_{B}(\bs y)\|_{2}$ for the remaining subsections.  For $B=1$, reconstruction was performed using both the \emph{binary iterative hard thresholding} (BIHT-$\ell_{1}$) and BIHT-$\ell_{2}$ algorithms; the former generally performs better in lower noise scenarios and the latter performs better in higher noise scenarios~\cite{JacLasBou::2011::Robust-1-bit}. We report the \emph{reconstruction SNR} (RSNR)
\begin{equation}
\mathrm{RSNR} := 10\log_{10}\left( \frac{\|\bs x\|_{2}^{2}}{\|\bs x - \widehat{\bs x} \|_{2}^{2}} \right)
\end{equation}
in dB unless otherwise noted.
  Recall that the number of measurements and bit-depth are constrained by $\mathfrak{B} = MB$.  We average our results over 100 trials for each parameter tuple \sloppy{$(N,K, \mathfrak{B}, B,\mathrm{ISNR})$}.

\subsection{Oracle-assisted reconstruction}\label{sec:oracle}
\begin{figure*}[!t] 
   \centering
   \begin{tabular}{cc}
   \includegraphics[width=.84\imgwidth]{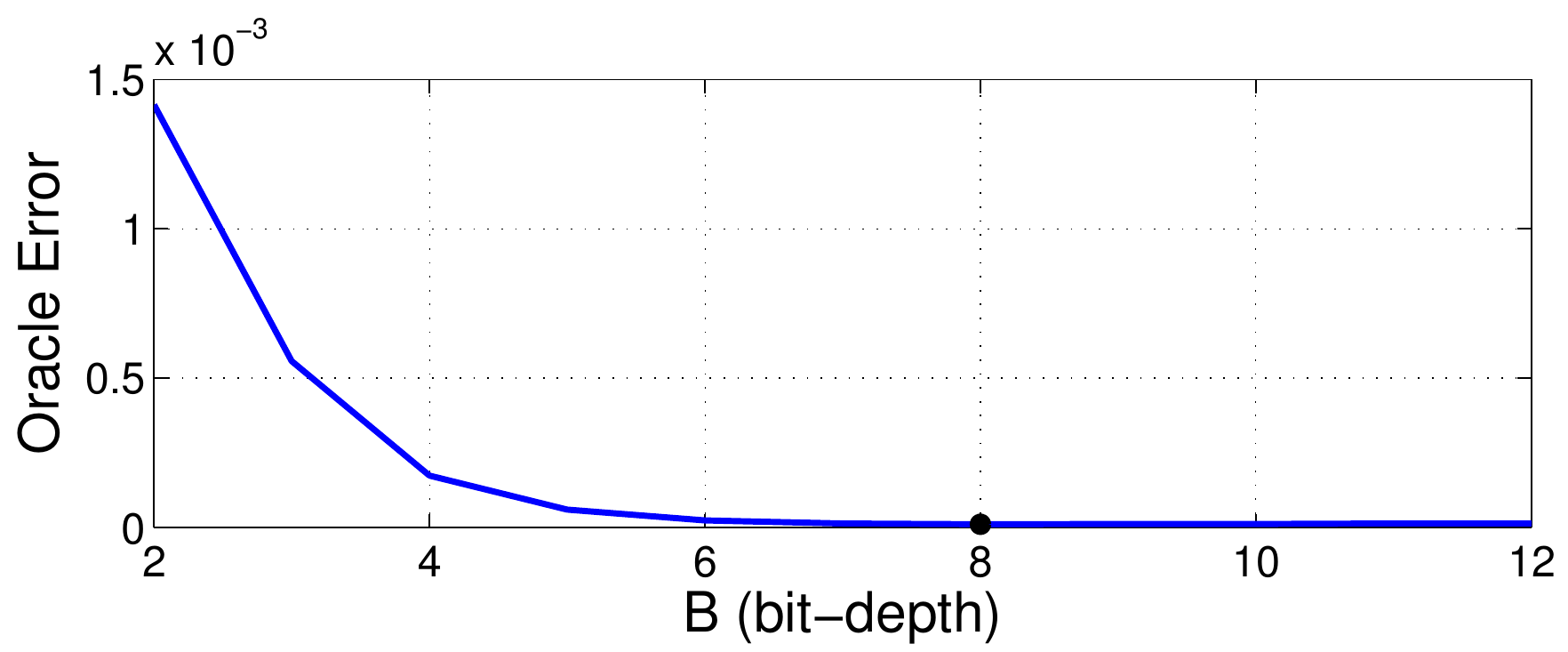}&
       \includegraphics[width=.84\imgwidth]{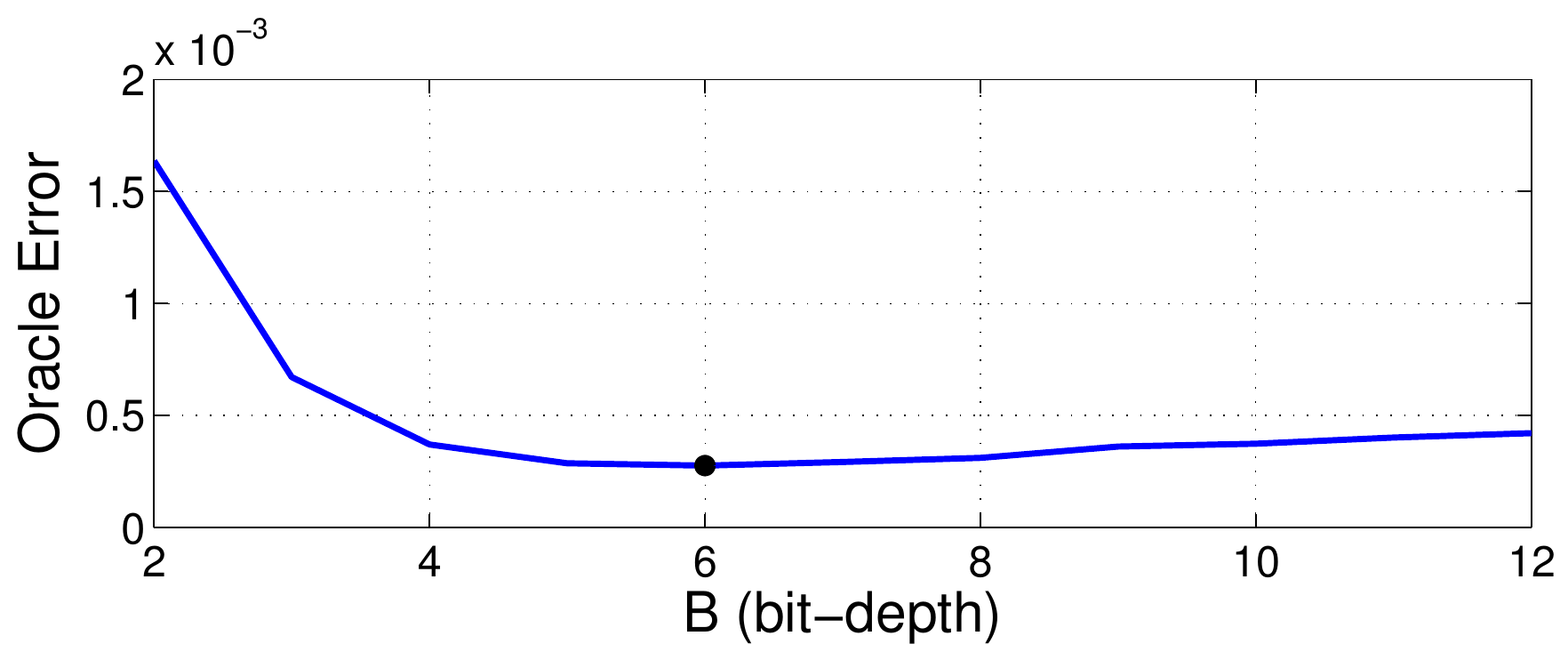}\\
       (a) \small{$\mathrm{ISNR} = 35$dB, optimal bit-depth $= 8$}&(b) \small{$\mathrm{ISNR} = 20$dB, optimal bit-depth $= 6$}\\
          \includegraphics[width=.84\imgwidth]{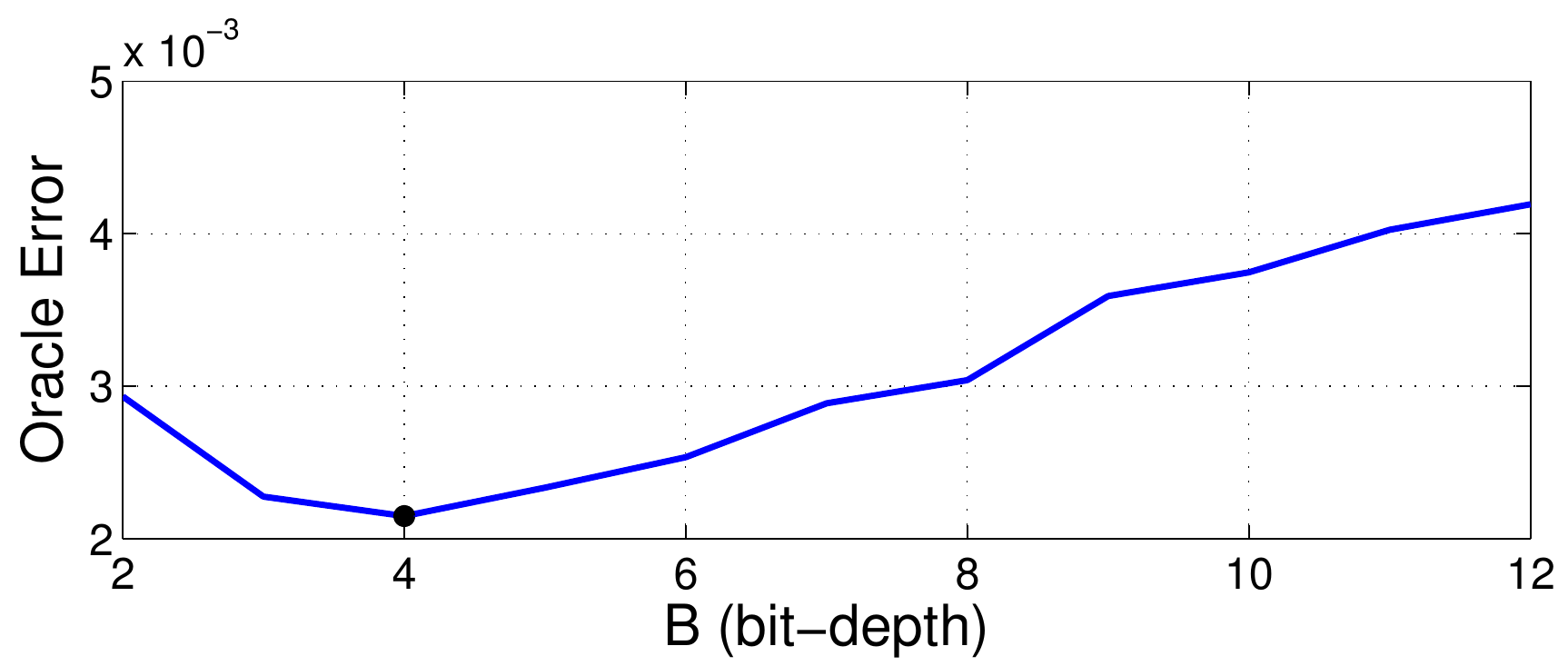}&
   \includegraphics[width=.87\imgwidth]{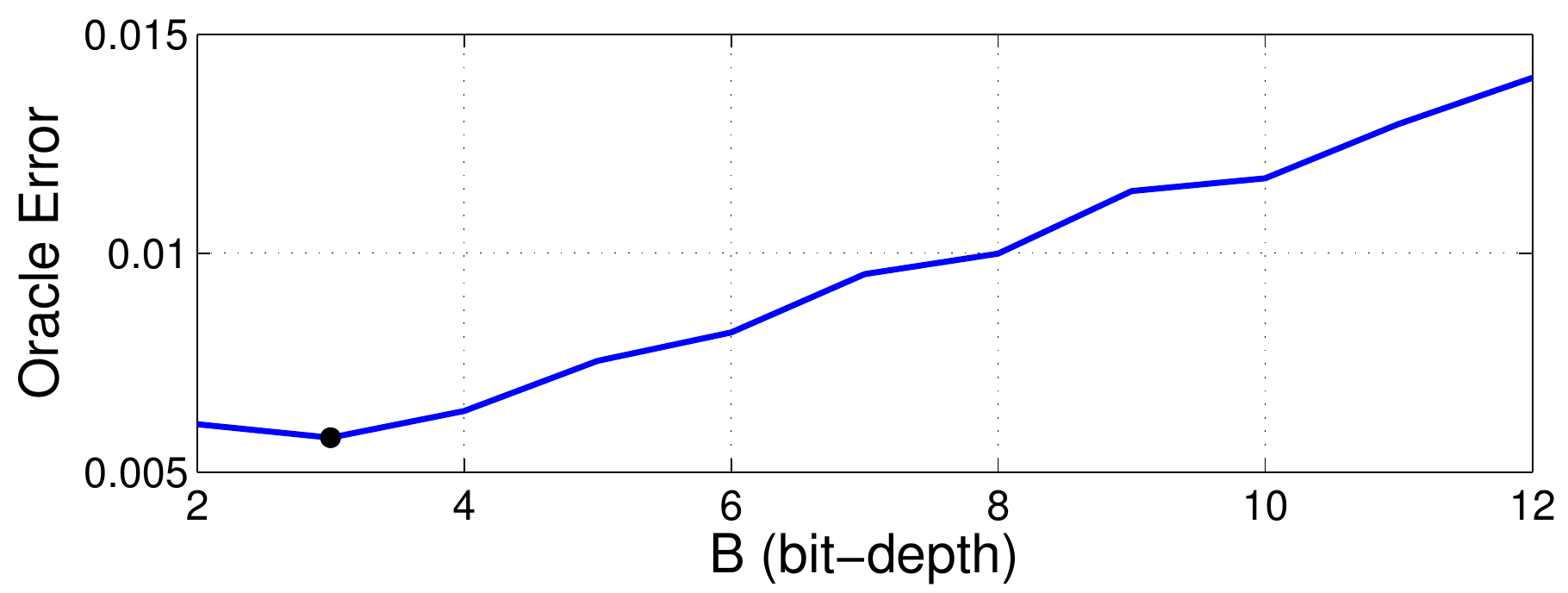}\\
	(c) \small{$\mathrm{ISNR} = 10$dB, optimal bit-depth $= 4$}&(d)\small{ $\mathrm{ISNR} = 5$dB, optimal bit-depth $= 3$}\\
   \end{tabular}
   \caption{Oracle-assisted reconstruction error (compare to the analytical upper bound plotted in Figure~\ref{fig:bound}) for $N=1000$, $K=10$, and $\mathfrak{B} = 3N$.  As predicted by (\ref{eq:errbound}), the minimum reconstruction error (denoted by black dots) is achieved by smaller bit-depths as the ISNR decreases. }
   \label{fig:errpermeas}
\end{figure*}

We begin by validating the message from Theorem~$1$, i.e., we examine the solution to the oracle-assisted reconstruction algorithm to see how the empirical performance relates to the bound (\ref{eq:errbound}).  Our goal is to compare the performance of our simulations to the theory-based plots in Figure~\ref{fig:bound}.  The experiments were performed as described previously with the oracle-assisted reconstruction algorithm. We plot the reconstruction error $\|\bs x - \widehat{\bs x}\|_{2}^{2}$ for bit-depths between $2$ and $12$ for a fixed bit-budget $\mathfrak{B}  = 3N$. We compared bit-depths of $2$ and higher, since (\ref{eq:errbound}) does not hold for lower bit-depths. Furthermore, unlike the statement of Theorem~\ref{thm:reconerrorbound}, recall that we used a uniform quantizer and not an optimal quantizer for the Gaussian measurements. Figures~\ref{fig:errpermeas}(a)--(d) depict the results for $\mathrm{ISNR} = 35$dB, $20$dB, $10$dB, and, $5$dB, respectively.   

The plots generally follow the same trends as in Figure~\ref{fig:bound}; however the minimum error occurs for a slightly higher bit-depth in each case.  The plots demonstrate that, as claimed in Section~\ref{sec:anal}, the best performance is obtained for smaller bit-depths as the ISNR decreases.

\subsection{Reconstruction performance as a function of $\mathfrak{B}$}
\begin{figure*}[!t] 
   \centering
   \begin{tabular}{cc}
   \includegraphics[width=.9\imgwidth]{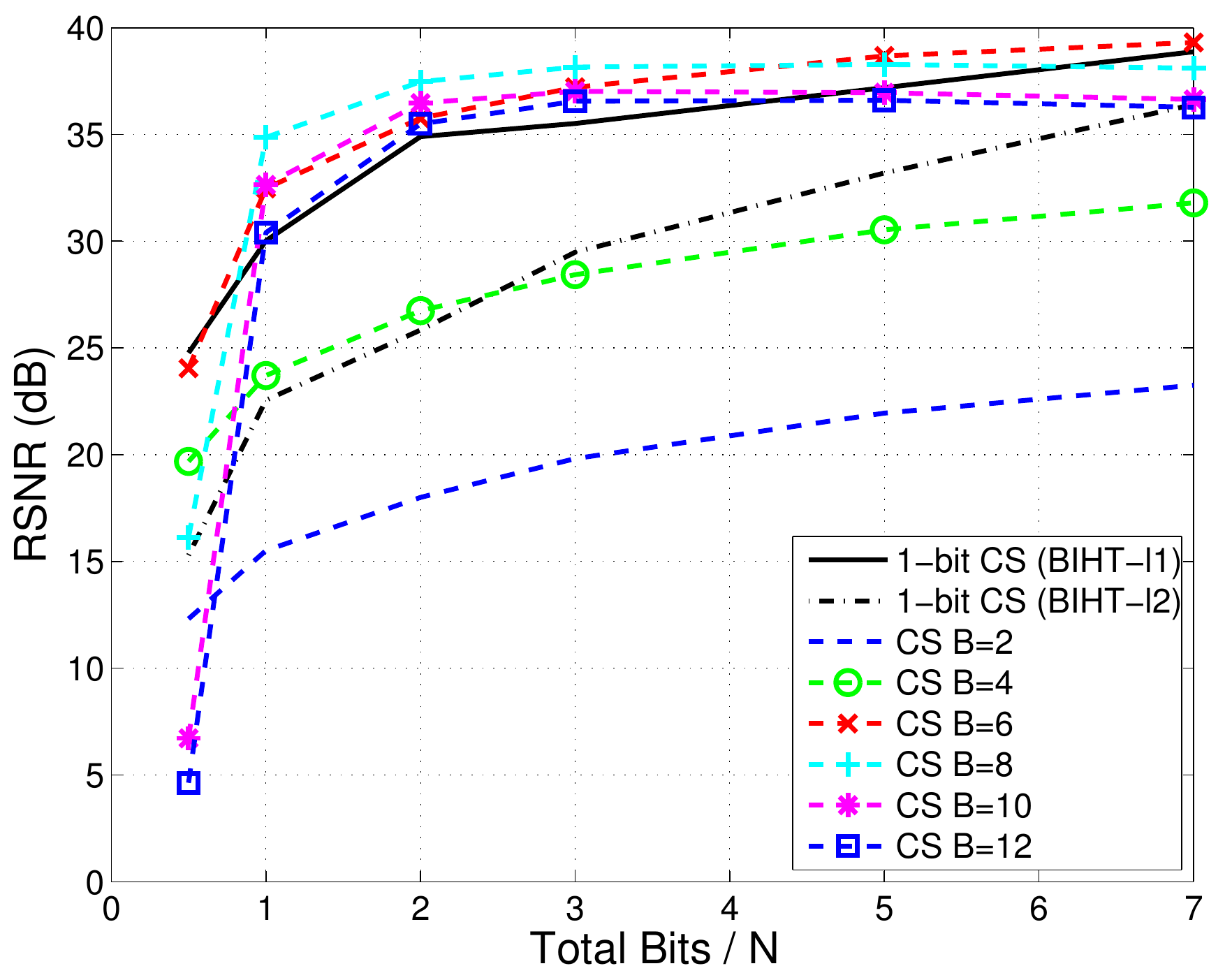}&
       \includegraphics[width=.9\imgwidth]{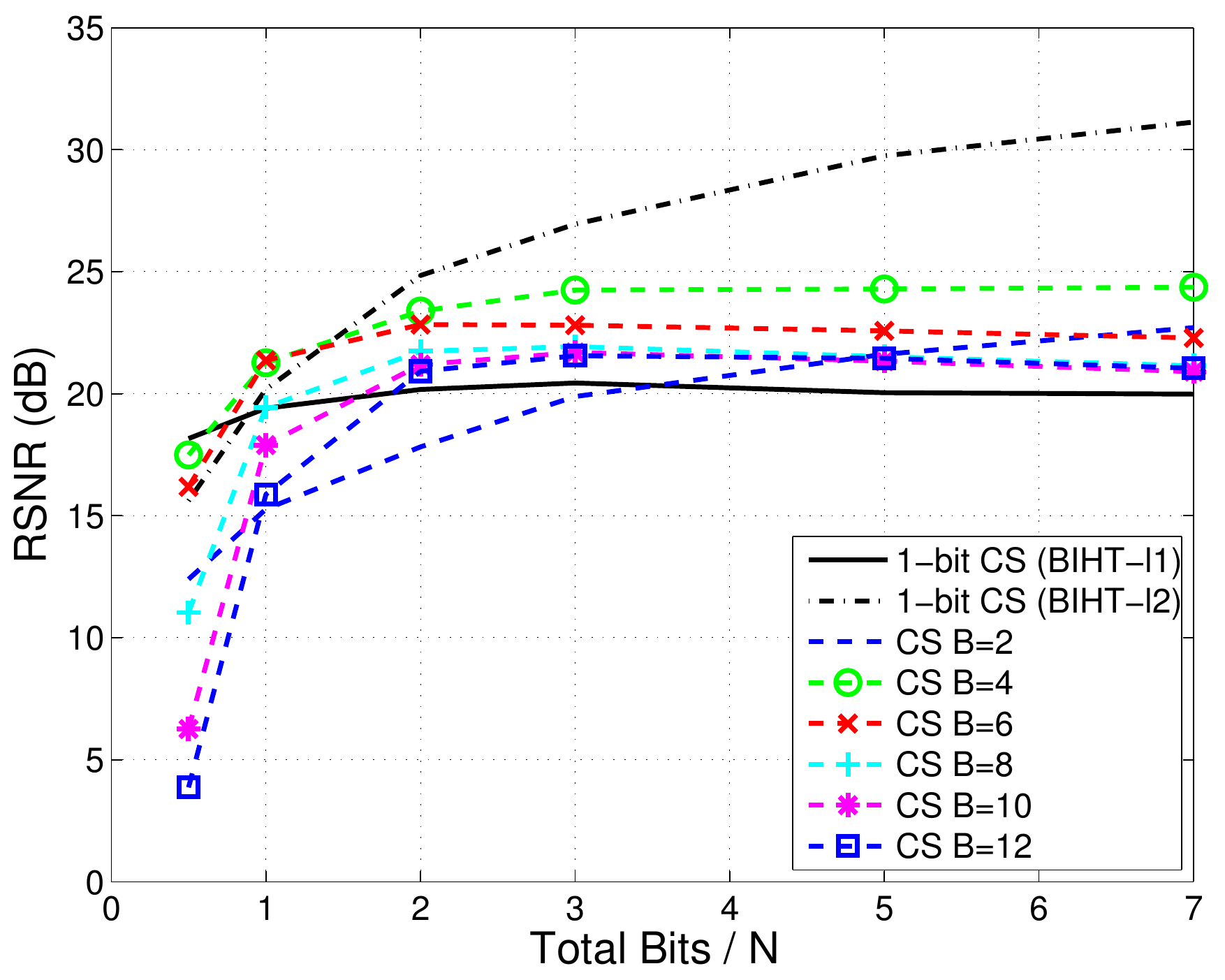}\\
       (a) $\mathrm{ISNR} = 35$dB &(b) $\mathrm{ISNR} = 20$dB\\
          \includegraphics[width=.9\imgwidth]{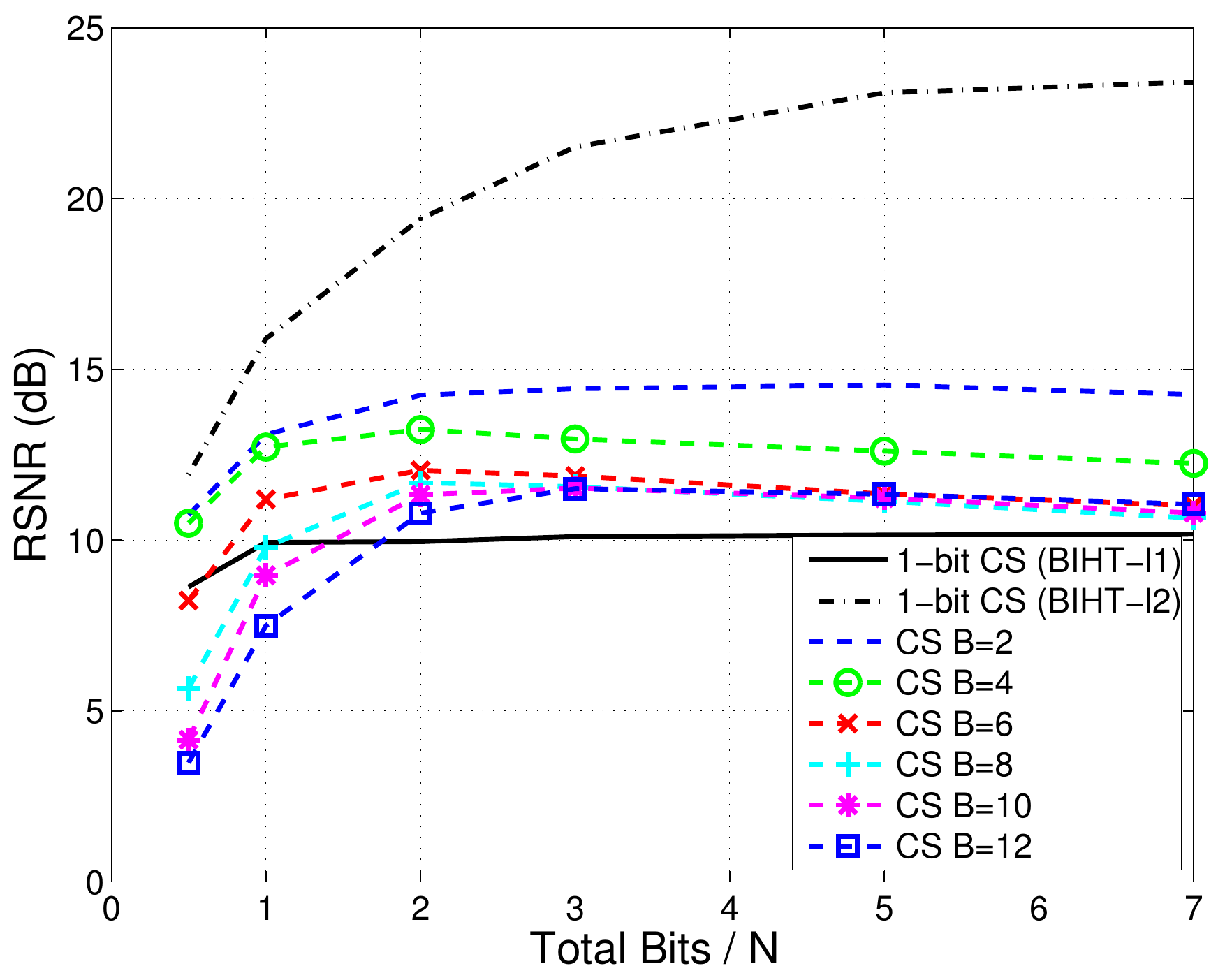}&
   \includegraphics[width=.9\imgwidth]{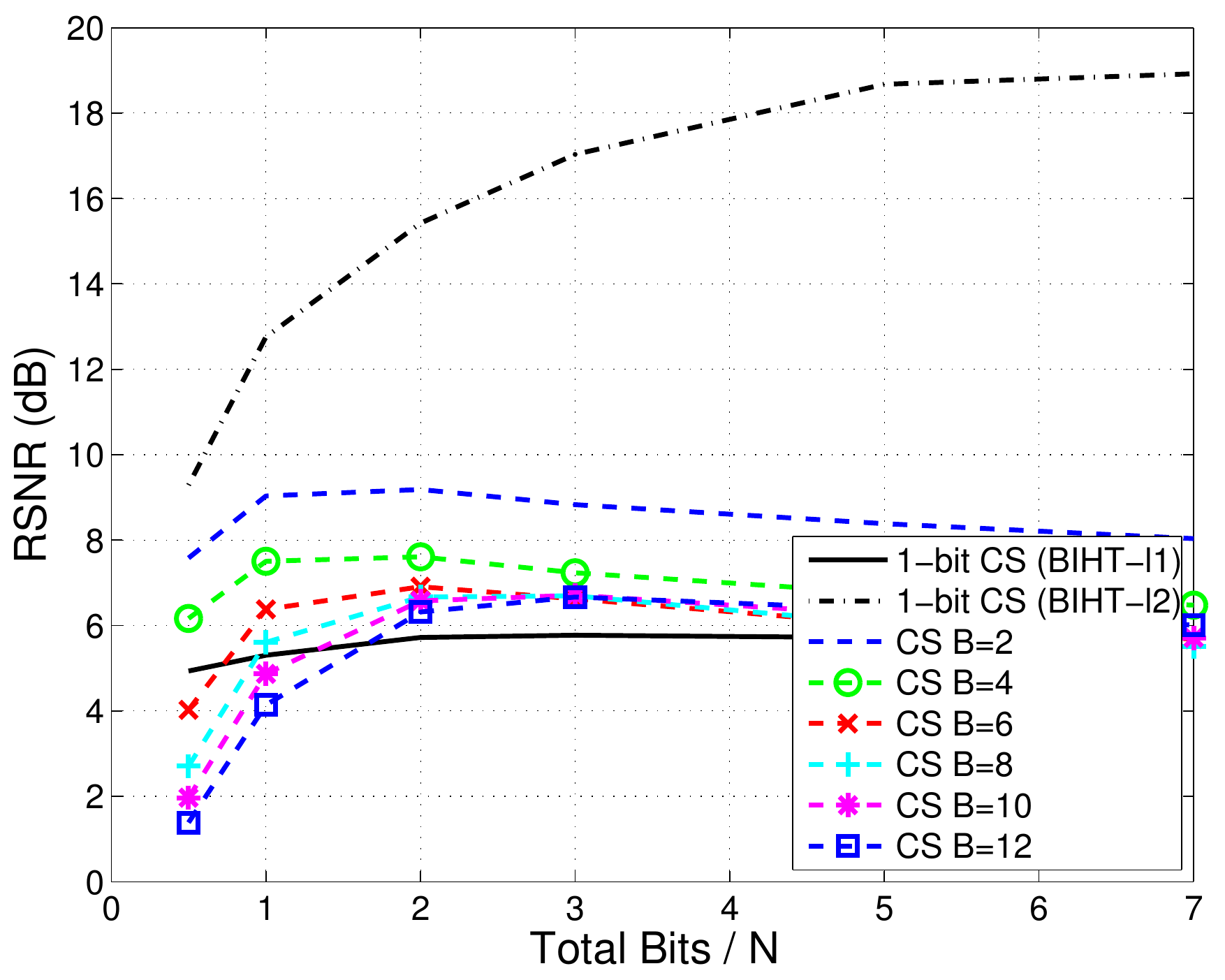}\\
	(c) $\mathrm{ISNR} = 10$dB &(d) $\mathrm{ISNR} = 5$dB\\
   \end{tabular}
   \caption{Reconstruction performance as a function of total bits, for different ISNRs.  Plots depict RSNR for different bit-depths $B$ for different $\mathrm{ISNR}$ with parameters $N=1000$ and $K=10$, and reconstruction via BPDN.   The figure demonstrates that as the ISNR is decreased, smaller bit-depths achieve better performance.  Additionally, $1$-bit CS techniques perform competitively with or better than BPDN for all ISNRs tested.}
   \label{fig:RSNRvB}
\end{figure*}
We next explore the performance achieved using practical algorithms instead of oracle-assisted reconstruction.  The experiments were performed as explained previously, for $N=1000$ and $K=10$, bit-depths $B = 1,2,4,6,8,10,12$, and for bit-budgets $\mathfrak{B} \in [N/2, 7N]$, with the BPDN and BIHT algorithms. Figures~\ref{fig:RSNRvB}(a)--(d) depict the experiment for the input $\mathrm{ISNR} = 35,20,10,5$dB, respectively.  

In the high ISNR regime of $35$dB, bit-depths of $B=1,6,8,10,$ and $12$ obtain similar RSNRs of around $35$dB, while smaller bit-depths result in poorer performance.  This is to be expected; since when the signal noise is fairly small, we will generally do better by using more bits per measurement.  

The performance of BIHT in this case is consistent with previous results showing that the $1$-bit techniques can outperform even $4$-bit uniformly quantized CS measurements with BPDN recovery.  This trend starts to reverse for lower signal ISNRs.  Indeed for ISNRs of $10$dB and $5$dB, we see that $2$ and $4$ bit-depth quantization outperforms larger bit-depths for all budgets.  Strikingly, the best performance for input SNRs of $20$dB, $10$dB, and $5$dB is achieved by acquiring just $1$ bit per measurement and reconstructing with the BIHT-$\ell_{2}$ algorithm.  

In addition to the simulations presented here, we also performed the similar simulations with $N=1000$ and $K=60$.  We found that all of the curves in Figure~\ref{fig:RSNRvB} dropped in SNR by roughly the same constant (that depends on $K$).  The relationship between the $1$-bit curves and the others was about the same for $\mathfrak{B} = 2N$ and lower.  For $\mathfrak{B} > 2N$, the $1$-bit reconstructions still outperformed the others; however the performance disparity was not as great as for $K=10$.

These simulations demonstrate two points.  First, they verify that the intuition provided by the upper bound (\ref{eq:errbound}) is indeed correct:  \emph{for lower ISNRs it is beneficial to choose smaller bit-depths $B$ and more measurements $M$.}  This validates the distinction between the QC and MC regimes.  Second, the $1$-bit CS setup performs significantly better than the multi-bit setup for low ISNRs and is competitive with the multi-bit setup for moderate ISNRs.  There are several reasons for this.  When the quantization error dominates the measurement noise, the reconstruction error is primarily due to the quantization error only. This case arises when $B$ is small; i.e., we can likely satisfy $\mathcal{Q}_{B}(\bs x + \bs n) = \mathcal{Q}_{B}(\bs x)$ for increasing values of $|n_{i}|$ as $B$ decreases.  
Furthermore, in this case consistent reconstruction of the $1$-bit algorithms may have an advantage.  Consistency could be presumably added to multibit reconstruction to improve performance but this is a topic left for future research.

\subsection{Reconstruction performance as a function of ISNR}

\begin{figure*}[!t] 
   \centering
   \begin{tabular}{ccc}
   \includegraphics[width=.55\imgwidth]{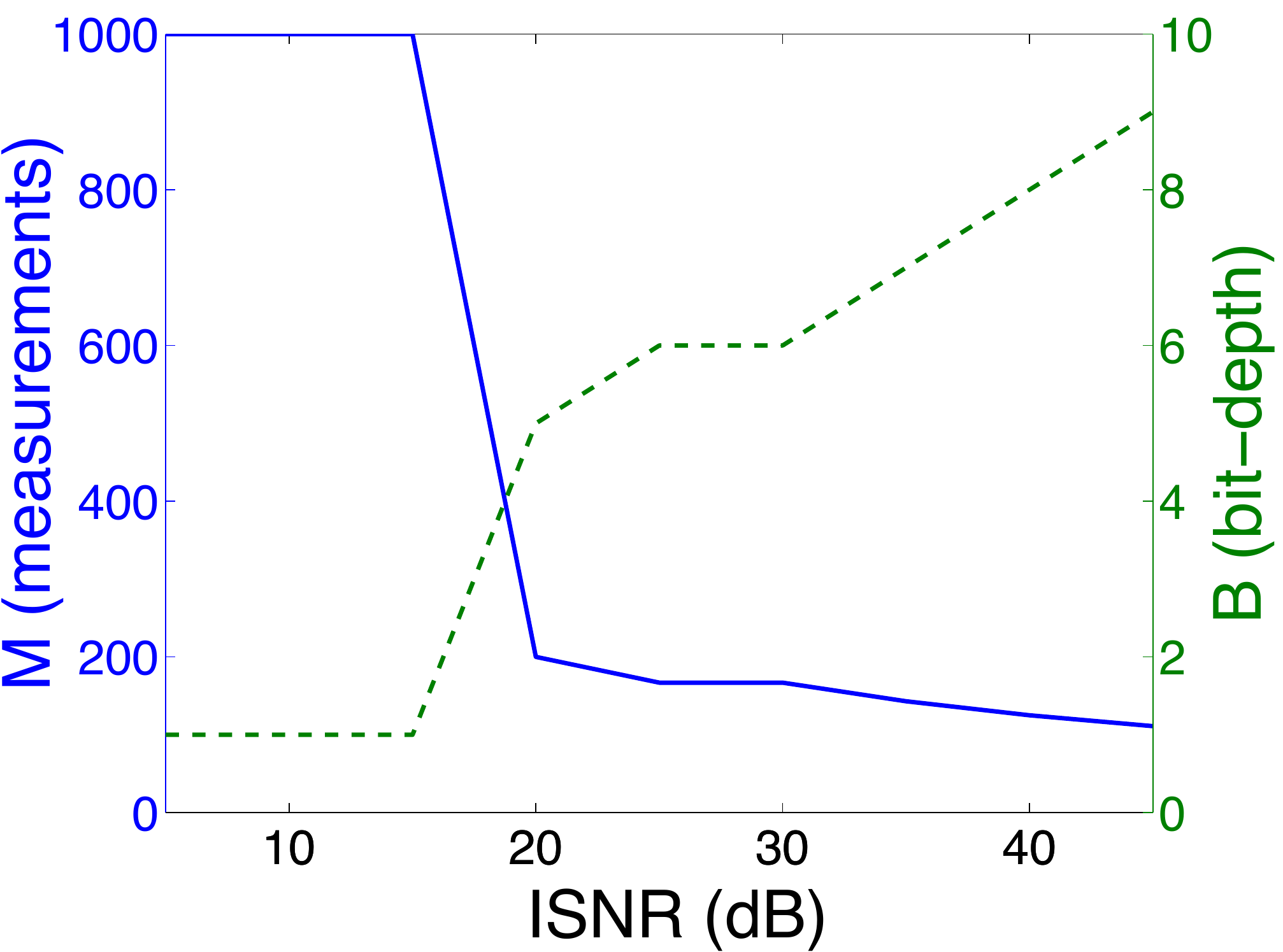}&
       \includegraphics[width=.55\imgwidth]{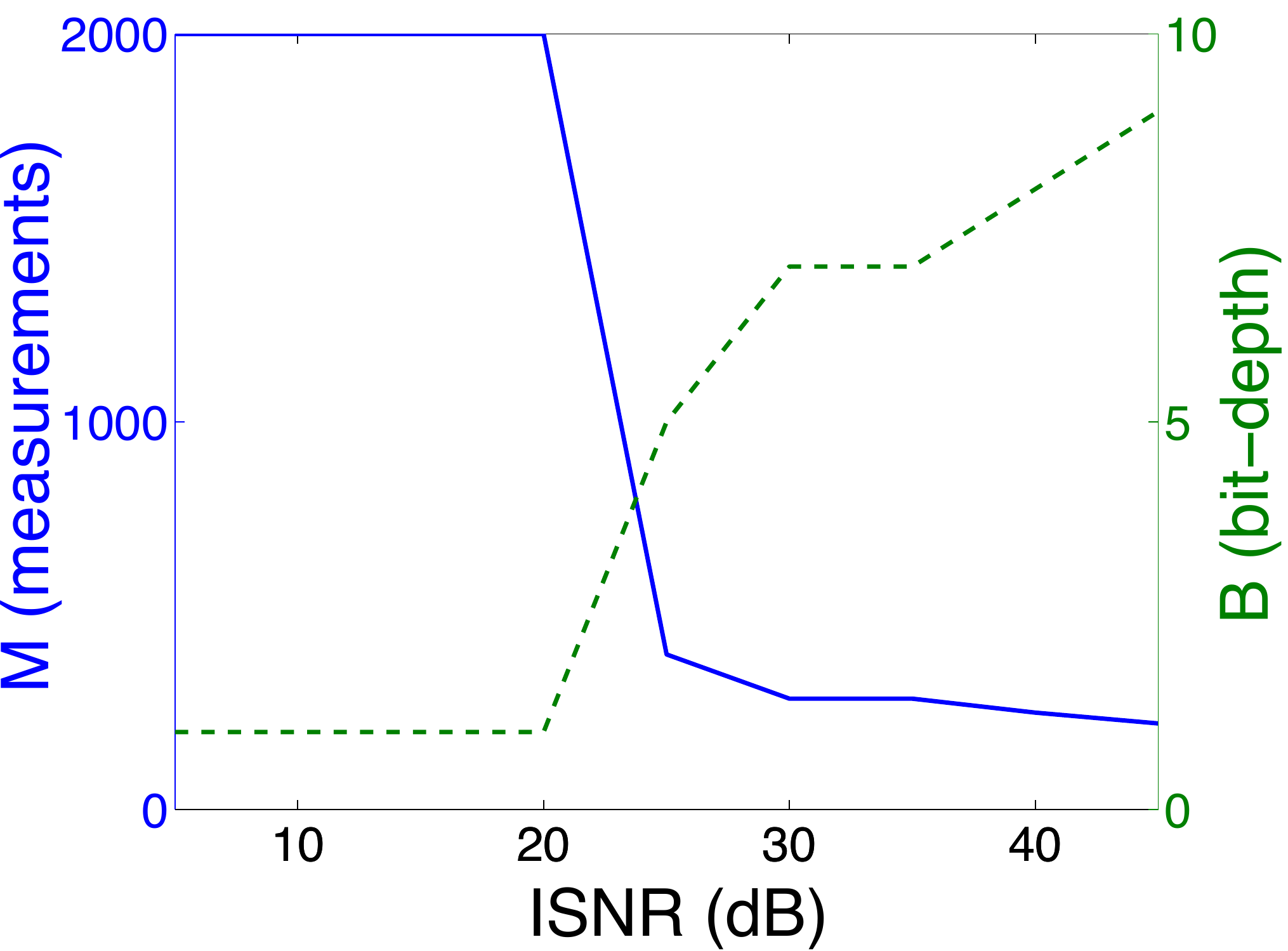}&
              \includegraphics[width=.55\imgwidth]{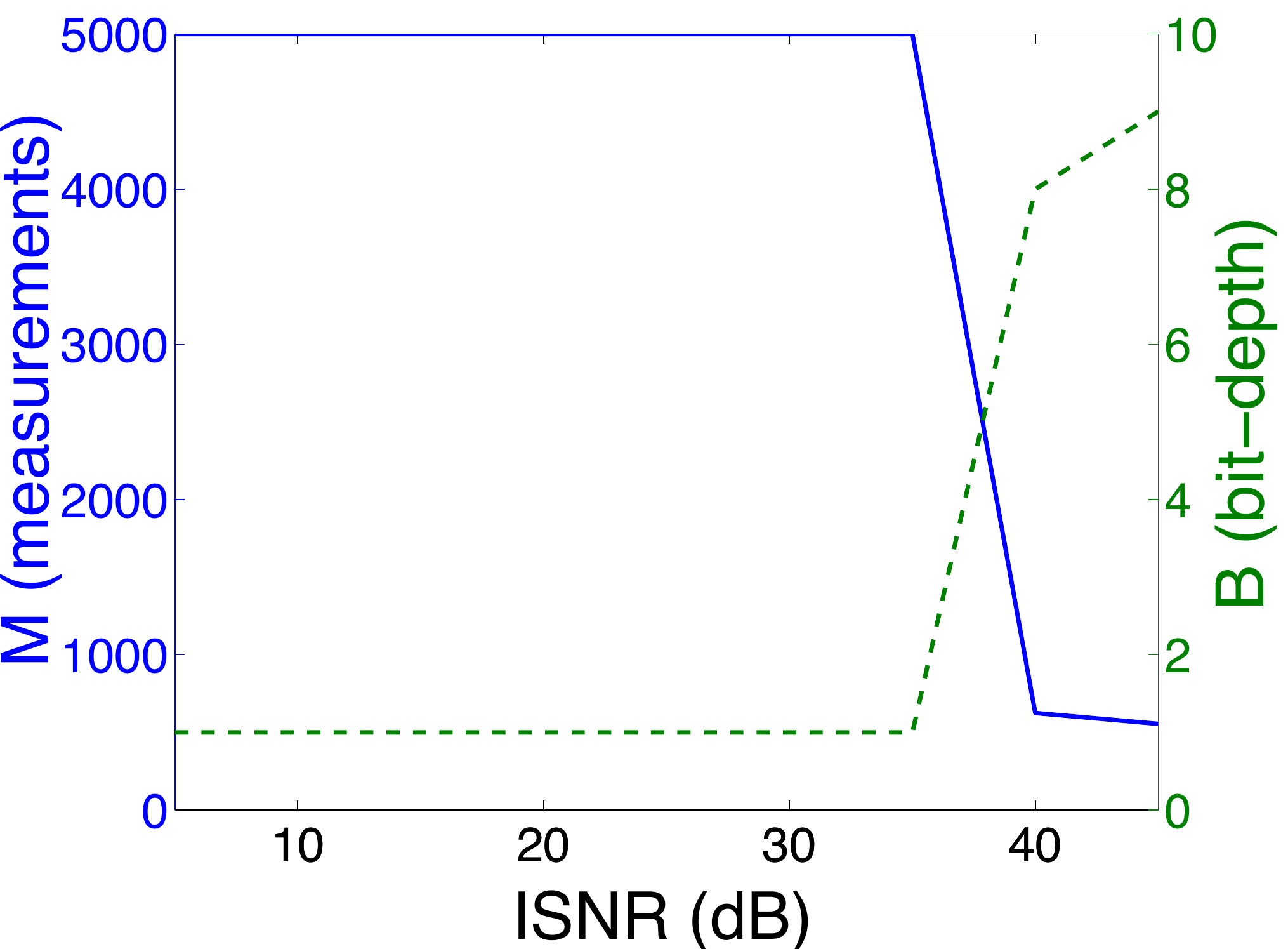}\\
       (a) $\mathfrak{B} = N$ & (b) $\mathfrak{B} = 2N$ & (c) $\mathfrak{B} = 5N$\\
   \end{tabular}
   \caption{Maximum RSNR given a fixed bit-budget $\mathfrak{B}$ for parameters $N=1000$, $K=10$.  The left side of each plot corresponds to the QC regime, while the right side corresponds to the MC regime. The solid line (blue) corresponds to the number of measurements $M$, while the dashed line (green) corresponds to the bit-depth $B$. }
   \label{fig:MvSNRvB}
\end{figure*}
In this set of experiments, we varied the $\mathrm{ISNR}$ between $5$dB and $45$dB and searched for the $(M,B)$ pair that maximized the RSNR, for a fixed bit-budget $\mathfrak{B}$ and parameters $N=1000$ and $K=10$.  As demonstrated by the previous experiment, the RSNR will not be the same for each bit-budget.

Figures~\ref{fig:MvSNRvB}(a)--(c) depict the results of this experiment for $\mathfrak{B} = N$, $2N$, and $5N$, respectively.  The left axis and solid line (blue) corresponds to the number of measurements $M$, while the right axis and dashed line (green) corresponds to the bit-depth $B$.  As always, we have that $\mathfrak{B} = MB$.  The QC regime is represented on the left side of the plots (low ISNR), while the MC regime is represented on the right side of the plots (high ISNR).  For example, for a bit-budget of $\mathfrak{B}=2N$, if the ISNR is $30$dB, then we are operating in the MC regime and should set the bit-depth to approximately $7$, resulting in the measurement ratio of approximately $M/N = 0.29$.  However, for the same bit-budget, if the ISNR is $15$dB, then we are operating in the QC regime and should set the bit-depth to $1$, resulting in a measurement ratio of $M/N = 2$.

In each plot in Figure~\ref{fig:MvSNRvB} there is a sharp transition between optimal bit-depth being high $(B \geq 5)$ and low $(B \leq 2)$.  This transition is centered at the $\mathrm{ISNR}$s $19$dB, $23$dB, and $38$dB, for the bit-budgets $\mathfrak{B} = N$, $2N$, and $5N$, respectively.  This implies that the transition occurs at higher ISNRs for higher bit-budgets.  Thus, we infer that, for higher bit-budgets $\mathfrak{B}$, it is better to choose low $B$, even when the input $\mathrm{ISNR}$ is fairly high.   The bottom line then is that, for moderate ISNR, the MC regime can be assumed when the bit-budget $\mathfrak{B}$ is small, while the QC regime can be assumed when the bit-budget is large.

\section{Discussion}
\label{sec:disc}

In this paper we have studied compressive sensing (CS) systems with scalar quantization when the total number of measurement bits is fixed and noise is present on the input signal.
Our results have demonstrated that \emph{in CS, it is sometimes better to reduce the bit-depth than the number of measurements}. We found that there exist two regimes: in the high-ISNR, MC regime, we should compress by reducing the number of measurements; this regime is best suited for applications where the acquisition if each measurement is expensive. In the low-ISNR, QC regime, we should compress by reducing the number of bits per measurement; this regime is best suited to applications where the acquisition of each measurement is cheap, or large bit-depth quantizers are expensive. 
The key to exposing the QC regime was the recognition that there is a tradeoff between amplified input signal noise (due to the underdetermined measurement system) and the number of bits that can be allocated per measurement under a fixed bit-budget.

Choosing a low bit-depth quantizer to reduce hardware complexity while driving up the sampling rate, as is recommended for the QC regime, is not a new idea.  Indeed, this same principle is the motivational force behind sigma-delta ADCs~\cite{bib:Candy92,bib:Aziz96,bib:Benedetto06,Bou::2006::Quantization-and-erasures} and other non-CS oversampled ADC architectures~\cite{CveDau::2007::Single-Bit-Oversampled,GoyVetTha::1998::Quantized-overcomplete,HoySadArc::2005::Monobit-digital}.  However, the ideas presented here differ significantly from previous oversampled ADC architectures in the following ways: \emph{i}) {CS is compressive}: Small bit-depth CS systems are expected to be used in cases where the bit-budget is significantly lower than in a conventional oversampled ADC system.  The use of sparse signal models enables \emph{compression}, i.e., a reduction in the total number of acquired bits, as opposed to just efficient sampling.  
\emph{ii}) {CS is non-adaptive}: As described earlier, CS measurement systems are non-adaptive, meaning they do not depend on the input signal.  This is true even for the $1$-bit CS case.  Almost all previous oversampled ADCs require some kind of feedback during quantization to produce stable representations. 
These differences place low bit-depth CS systems in a unique class of their own.  In a few words, CS, like physics has ``plenty of room at the bottom~\cite{feynmanRoomAtBottom}.''  

\appendix
\section{Proof of Theorem~\ref{thm:reconerrorbound}}
\label{apx:proof}

We first extend the upper bound of Theorem~4.1 in \cite{DavLasTre::2011::The-pros-and-cons} on the oracle-assisted reconstruction error to account for correlated measurement noise.
\begin{lemma}\label{lem:oracleCorr}
Suppose that $\bs y = \Phi \bs x + \bs z,$ where $\bs z \in \mathbb{R}^{M}$ is a zero-mean, random vector with covariance matrix $\Sigma = \mathbb{E}(\bs z \bs z^{T})$, and that $\bs x$ is $K$-sparse.  Furthermore, suppose that $\Phi$ satisfies the RIP of order $K$ with constant $\delta$.  Then the estimate $\widehat{\bs x}$ provided by the oracle-assisted reconstruction algorithm (\ref{eq:oraclerecon}) satisfies
\begin{equation}
\mathbb{E}\left( \| \bs x - \widehat{\bs x}  \|_{2}^{2}\right) \leq \frac{K}{1-\delta}\lambda_{\mathrm{max}}(\Sigma),
\end{equation}
where $\lambda_{\mathrm{max}}(\Sigma)$ is the largest eigenvalue of $\Sigma$.
\end{lemma}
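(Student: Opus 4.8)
The plan is to reduce the claim to a one-line computation involving the Gram matrix $G := \Phi_{\Omega}^{T}\Phi_{\Omega}$ of the active columns, whose eigenvalues are pinned to $[1-\delta,1+\delta]$ by the RIP. First I would observe that, since $\Phi$ has the RIP of order $K$ with constant $\delta<1$, any collection of at most $K$ columns of $\Phi$ is linearly independent; hence $\Phi_{\Omega}$ has full column rank, $\Phi_{\Omega}^{\dagger} = G^{-1}\Phi_{\Omega}^{T}$, and $\Phi_{\Omega}^{\dagger}\Phi_{\Omega} = \bs I$. Because $\bs x$ is supported on $\Omega$ we have $\Phi\bs x = \Phi_{\Omega}\bs x|_{\Omega}$, so the oracle estimate satisfies $\widehat{\bs x}|_{\Omega} = \Phi_{\Omega}^{\dagger}(\Phi_{\Omega}\bs x|_{\Omega}+\bs z) = \bs x|_{\Omega}+\Phi_{\Omega}^{\dagger}\bs z$ and $\widehat{\bs x}|_{\Omega^{C}}=\bs 0$. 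Thus the error is supported on $\Omega$ and equals $\Phi_{\Omega}^{\dagger}\bs z$ there, yielding the exact identity $\|\bs x-\widehat{\bs x}\|_{2}^{2} = \|\Phi_{\Omega}^{\dagger}\bs z\|_{2}^{2} = \bs z^{T}\Phi_{\Omega}G^{-2}\Phi_{\Omega}^{T}\bs z$ (using that $G^{-1}$ is symmetric).

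Next I would take the expectation over $\bs z$ and rewrite the quadratic form as a trace: by linearity of trace and expectation together with $\mathbb{E}(\bs z\bs z^{T})=\Sigma$, one gets $\mathbb{E}\|\bs x-\widehat{\bs x}\|_{2}^{2} = \mathrm{tr}\big(\Phi_{\Omega}G^{-2}\Phi_{\Omega}^{T}\Sigma\big)$. The matrix $A := \Phi_{\Omega}G^{-2}\Phi_{\Omega}^{T}$ is symmetric positive semidefinite, and $\Sigma$ is a covariance matrix, hence also PSD; for such matrices $\mathrm{tr}(A\Sigma)\le\lambda_{\mathrm{max}}(\Sigma)\,\mathrm{tr}(A)$ (expand $A$ in the eigenbasis of $\Sigma$, or use $\Sigma\preceq\lambda_{\mathrm{max}}(\Sigma)\bs I$ and monotonicity of $M\mapsto\mathrm{tr}(AM)$ on the PSD cone). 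Finally, the cyclic property of trace gives $\mathrm{tr}(A) = \mathrm{tr}(G^{-2}\Phi_{\Omega}^{T}\Phi_{\Omega}) = \mathrm{tr}(G^{-1}) = \sum_{i=1}^{K}\mu_{i}^{-1}$, where $\mu_{1},\dots,\mu_{K}$ are the eigenvalues of $G$; the RIP forces each $\mu_{i}\ge 1-\delta$, so $\mathrm{tr}(G^{-1})\le K/(1-\delta)$. Combining the three displays gives $\mathbb{E}\|\bs x-\widehat{\bs x}\|_{2}^{2}\le\frac{K}{1-\delta}\lambda_{\mathrm{max}}(\Sigma)$, and since the bound is uniform in $\Omega$ it also survives averaging over a random support.

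There is no serious obstacle: this is essentially Theorem~4.1 of \cite{DavLasTre::2011::The-pros-and-cons} with the white-noise covariance $\sigma_{\bs z}^{2}\bs I$ replaced by a general $\Sigma$, and the only step requiring a little care is the matrix trace inequality $\mathrm{tr}(A\Sigma)\le\lambda_{\mathrm{max}}(\Sigma)\,\mathrm{tr}(A)$ — in particular checking that $A$ is genuinely PSD so the inequality points the right way; everything else is bookkeeping with the pseudoinverse and the Gram matrix. Note that, in contrast to~(\ref{eq:oraclerecon}), the matching lower bound is lost because $\lambda_{\mathrm{max}}(\Sigma)$ is only a one-sided surrogate for $\Sigma$, but the upper bound is all that is needed downstream in the proof of Theorem~\ref{thm:reconerrorbound}.
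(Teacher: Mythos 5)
Your proof is correct and follows essentially the same route as the paper's: both reduce the error to $\mathbb{E}\|\Phi_{\Omega}^{\dagger}\bs z\|_{2}^{2}$, express it as a trace, and pull out $\lambda_{\mathrm{max}}(\Sigma)$ together with the RIP bound $\lambda_{\min}(\Phi_{\Omega}^{T}\Phi_{\Omega})\geq 1-\delta$. The only cosmetic difference is that you evaluate $\mathrm{tr}\big((\Phi_{\Omega}^{\dagger})^{T}\Phi_{\Omega}^{\dagger}\big)=\mathrm{tr}(G^{-1})$ exactly before bounding it by $K/(1-\delta)$, whereas the paper bounds the sum of the $K$ eigenvalues of $\Phi_{\Omega}^{\dagger}\Sigma(\Phi_{\Omega}^{\dagger})^{T}$ by $K$ times the largest one; both collapse to the same final constant.
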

\begin{proof}
For a fixed support set $\Omega \in \{ 1, \ldots, N\}$ with $|\Omega| = K$, the RIP ensures that $\Phi_{\Omega}$ is full rank, and thus the oracle estimate satisfies
\begin{equation}
\widehat{\bs x}|_{\Omega} = \bs x|_{\Omega} + \Phi^{\dagger}_{\Omega}\bs z.
\end{equation}
We seek to estimate $\mathbb{E}\left(\| \Phi_{\Omega}^{\dagger} \bs z \|_{2}^{2}\right)$.  

For any $K\times M$ matrix $A$ we have that
\begin{eqnarray}
\mathbb{E}\left( \| A \bs z \|_{2}^{2} \right) &=& \mathbb{E}(\mathrm{Tr}(A\bs z(A\bs z)^{T}))= \mathbb{E}(\mathrm{Tr}(A\bs z\bs z^{T} A^{T}))\nonumber\\
&=& \mathrm{Tr}(A\mathbb{E}(\bs z\bs z^{T}) A^{T}) = \mathrm{Tr}(A\Sigma A^{T})\nonumber\\
\label{eq:convertToEig}
&=& \sum_{j=1}^{K} \lambda_{j}(A\Sigma A^{T}),
\end{eqnarray}
where $\lambda_{j}(A\Sigma A^{T})$ denotes the $j$-th eigenvalue of $A\Sigma A^{T}$, and (\ref{eq:convertToEig}) follows since $A\Sigma A^{T}$ is a $K\times K$ matrix.
Lemma~8.2 of \cite{DavLasTre::2011::The-pros-and-cons} explains that the eigenvalues of this matrix can be upper bounded as 
\begin{eqnarray}
\lambda_{\mathrm{max}}(A\Sigma A^{T}) &\leq&\lambda_{\mathrm{max}}(AA^{T}) \lambda_{\mathrm{max}}(\Sigma)\nonumber \\
\label{eq:eigbound}
&\leq&   s_{\mathrm{max}}(A)^{2}\lambda_{\mathrm{max}}(\Sigma),
\end{eqnarray}
where $s_{\mathrm{max}}(A)$ denotes the maximum singular value of $A$.

Thus, to obtain the final bound, we combine (\ref{eq:convertToEig}) with (\ref{eq:eigbound}) and substitute $A = \Phi_{\Omega}^{\dagger}$, yielding
\begin{eqnarray}
\mathbb{E}\left(\| \Phi_{\Omega}^{\dagger} \bs z \|_{2}^{2} \right) &\leq& Ks_{\mathrm{max}}(\Phi_{\Omega}^{\dagger})^{2}\lambda_{\mathrm{max}}(\Sigma)\nonumber\\
&\leq& \frac{K}{1-\delta}\lambda_{\mathrm{max}}(\Sigma),
\end{eqnarray}
since we have that $s_{\mathrm{max}}(\Phi_{\Omega}^{\dagger})^{2} \leq \frac{1}{1-\delta}$ from Lemma~8.1 of \cite{DavLasTre::2011::The-pros-and-cons}.
\end{proof}

We next demonstrate that, by choosing  a signal model with random values and supports, the noiseless measurements $\Phi \bs x$ are identically distributed and uncorrelated.
\begin{lemma}\label{lem:sigmeas}
Let $\bs x \in \mathbb{R}^{N}$ be a sparse signal with support $\Omega \in \{1, \ldots, N\}$ and $|\Omega|=K$, where the elements $\Omega$ are chosen uniformly at random and the amplitudes of the non-zero coefficients are drawn according to $x_{j}\in \Omega \sim \mathcal{N}(0,\sigma_{\bs x}^{2})$.  Furthermore, let the $M\times N$ matrix $\Phi$ satisfy $\Phi\Phi^{T} = \frac{N}{M}\bs I_{M}$.  Then the vector $\Phi\bs x$ is distributed as a mixture of Gaussians with
\begin{equation}
\mathbb{E}((\Phi \bs x)_{i}) = 0,  \qquad  \mathbb{E}((\Phi \bs x)(\Phi \bs x)^{T}) = \frac{K}{M}\sigma_{\bs x}^{2}\bs I_{M},
\end{equation}
i.e., the elements $(\Phi\bs x)_{i}$ of $\Phi\bs x$ are zero-mean uncorrelated variables.
\end{lemma}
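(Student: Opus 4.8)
The plan is to condition on the random support $\Omega$, observe that conditionally $\Phi\bs x$ is a linear image of an independent Gaussian vector and hence Gaussian, and then average over $\Omega$ to recover both the mixture structure and the two claimed moments. Concretely, I would first fix a realization of $\Omega$ and write $\Phi\bs x = \sum_{j\in\Omega} x_j\,\phi_j$, where $\phi_j$ denotes the $j$-th column of $\Phi$ and the $x_j$ are i.i.d.\ $\mathcal{N}(0,\sigma_{\bs x}^2)$ given $\Omega$. Since a linear combination of independent Gaussians is Gaussian, this yields $\Phi\bs x \mid \Omega \sim \mathcal{N}(\bs 0,\ \sigma_{\bs x}^2\,\Phi_\Omega\Phi_\Omega^T)$ with $\Phi_\Omega\Phi_\Omega^T = \sum_{j\in\Omega}\phi_j\phi_j^T$. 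Averaging this conditional law over the uniform distribution of $\Omega$ on the $\binom{N}{K}$ size-$K$ subsets then displays $\Phi\bs x$ as a finite mixture of mean-zero Gaussians, and in particular $\mathbb{E}((\Phi\bs x)_i) = \mathbb{E}_\Omega\,\mathbb{E}((\Phi\bs x)_i\mid\Omega) = 0$ by the tower property.

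Next I would compute the unconditional second moment. Because every conditional mean vanishes, the law of total expectation gives $\mathbb{E}((\Phi\bs x)(\Phi\bs x)^T) = \sigma_{\bs x}^2\,\mathbb{E}_\Omega(\Phi_\Omega\Phi_\Omega^T)$, and I would expand $\Phi_\Omega\Phi_\Omega^T = \sum_{j=1}^N \mathbf{1}[j\in\Omega]\,\phi_j\phi_j^T$. Taking the expectation over the uniformly chosen support and using that $\Pr[j\in\Omega] = K/N$ for every index $j$ gives $\mathbb{E}_\Omega(\Phi_\Omega\Phi_\Omega^T) = \tfrac{K}{N}\sum_{j=1}^N\phi_j\phi_j^T = \tfrac{K}{N}\,\Phi\Phi^T$; the tight-frame hypothesis $\Phi\Phi^T = \tfrac{N}{M}\bs I_M$ then collapses this to $\tfrac{K}{M}\bs I_M$, so that $\mathbb{E}((\Phi\bs x)(\Phi\bs x)^T) = \tfrac{K}{M}\sigma_{\bs x}^2\,\bs I_M$. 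The vanishing off-diagonal entries are precisely the assertion that the $(\Phi\bs x)_i$ are uncorrelated, each with variance $\tfrac{K}{M}\sigma_{\bs x}^2$.

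I do not expect a real obstacle: the argument is conditioning plus one symmetry computation. The only point deserving an explicit sentence is the exchangeability of the support indices — that each coordinate lies in $\Omega$ with probability $K/N$ — which is immediate from the uniform law on $\Omega$. The single thing to keep straight in the write-up is the distinction between conditional and unconditional expectations, since $\Phi$ is deterministic while both $\Omega$ and the nonzero amplitudes are random.
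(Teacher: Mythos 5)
Your proposal is correct and follows essentially the same route as the paper: condition on the support to get a zero-mean Gaussian mixture, then use $\Pr[j\in\Omega]=K/N$ together with the tight-frame condition $\Phi\Phi^{T}=\frac{N}{M}\bs I_{M}$ to compute the covariance. The only cosmetic difference is that the paper first evaluates $\mathbb{E}(\bs x\bs x^{T})=\frac{K}{N}\sigma_{\bs x}^{2}\bs I_{N}$ and conjugates by $\Phi$, whereas you average $\Phi_{\Omega}\Phi_{\Omega}^{T}$ over $\Omega$ directly; the two computations are identical.
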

\begin{proof}

For a fixed support $\Omega$, each element $(\Phi\bs x)_{i}$ is Gaussian distributed with mean zero since it is the sum of $K$ zero-mean Gaussian variables.  Furthermore, the distribution of $(\Phi\bs x)_{i}$ over all possible supports is the sum of the distribution for each fixed support, scaled by the probability that they occur. Thus, $(\Phi\bs x)_{i}$ is a mixture of Gaussians with $\mathbb{E}( (\Phi\bs x)_{i}) = 0$.

To derive the variance of the elements and also show that they are uncorrelated, we first examine $\mathbb{E}(\bs x \bs x^{T})$.  The off-diagonal elements are zero, i.e., $\mathbb{E}(x_{i} x_{j})_{i\neq j} = 0$,
since the elements of $\bs x$ are uncorrelated, by definition.  Furthermore, the variance of the diagonal elements can be computed as
\begin{eqnarray}
\mathbb{E}(x_{i}^{2}) = \sigma_{\bs x}^{2}\mathbb{P}(i \in \Omega) =  \frac{K}{N}\sigma_{\bs x}^{2},\nonumber
\end{eqnarray}
since the $K$ non-zero support locations are chosen uniformly, any location $j$ is chosen with probability $K/N$.
Thus, $\mathbb{E}(\bs x \bs x^{T}) = \frac{K}{N}\sigma_{\bs x}^{2}\bs I_{N}$.  We next compute the correlation of the measurements $\Phi\bs x$ to obtain
\begin{eqnarray}
\mathbb{E}(\Phi\bs x (\Phi \bs x)^{T}) &=& \Phi \mathbb{E}(\bs x \bs x^{T}) \Phi^{T}\nonumber\\
&=&\frac{K}{N}\sigma_{\bs x}^{2}\Phi\Phi^{T} = \frac{K}{M}\sigma_{\bs x}^{2}\bs I_{M},
\end{eqnarray}
which concludes the proof.
\end{proof}

\begin{proof}[Proof of Theorem 1.]
Denote the error between the noiseless ideal measurements and $\bs y_{Q}$ by
\begin{equation}
\bs z := \Phi \bs x - \mathcal{Q}_{B}(\Phi \bs x + \Phi \bs n).
\end{equation}
Our goal is to determine a bound on the variance $\sigma_{z_{i}}^{2}$ of each element $z_{i}$ of $\bs z$.  We begin by rewriting the norm squared of $\bs z$ as
\begin{eqnarray}
z_{i}^{2} &=& [(\Phi \bs x)_{i} - \mathcal{Q}_{B}(\Phi \bs x + \Phi \bs n)_{i})]^{2}\nonumber\\
&=&  [(\Phi \bs x + \Phi \bs n)_{i} - \mathcal{Q}_{B}(\Phi \bs x + \Phi \bs n)_{i} - (\Phi \bs n)_{i} ]^{2}\nonumber \\
\label{eq:zibound}
&\leq&  2 [(\Phi \bs x + \Phi \bs n)_{i} - \mathcal{Q}_{B}(\Phi \bs x + \Phi \bs n)_{i}]^{2} + 2(\Phi \bs n)_{i}^{2},
\end{eqnarray}
where the index $i$ denotes individual elements of the respective vector.

We now seek an upper bound on the expected value of each of the quantities in (\ref{eq:zibound}). 
We begin with the second term in (\ref{eq:zibound}).  From the definition of $\Phi$, we have that the elements of $\Phi \bs n$ have variance 
\begin{equation}
\label{eq:noisefoldu}
\sigma_{\Phi \bs n}^{2} = \mathbb{E}((\Phi\bs n)_{i}^{2})  =  \frac{N}{M}\sigma_{\bs n}^{2},
\end{equation}
and furthermore are uncorrelated,
as was reviewed in Section~\ref{sec:background}. 

To bound the first term in  (\ref{eq:zibound}), we note that the optimal scalar
quantizer of rate $B$ for a Gaussian variable $g$ with
variance $\sigma^{2}$ has MSE given by $\mathbb{E}(g -
\mathcal{Q}_{B}(g))^{2} = \sigma^{2}2^{-2B}$.  
Furthermore, the MSE of an optimal quantizer of rate $B$ for any variable with variance
$\sigma^{2}$ is upper bounded by that of a Gaussian variable.
Our goal is to apply this quantization bound to $(\Phi \bs x + \Phi \bs n)_{i}$.  Since $(\Phi \bs x)_{i}$ and $(\Phi \bs n)_{i}$ are zero mean and independent of each other, then we immediately have that $\mathbb{E}\left((\Phi \bs x + \Phi \bs n)_{i}^{2}\right) = \frac{K}{M}\sigma_{\bs x}^{2} +  \frac{N}{M}\sigma_{\bs n}^{2}$,
where the first term follows from Lemma~\ref{lem:sigmeas}, and the second term follows from (\ref{eq:noisefoldu}). 
Thus, we can bound the first term in
(\ref{eq:zibound}) as
\begin{eqnarray}
\mathbb{E}\left( [(\Phi \bs x + \Phi \bs n)_{i} - \mathcal{Q}_{B}(\Phi \bs x + \Phi \bs n)_{i}]^{2}\right) &\leq& \mathbb{E}\left((\Phi \bs x + \Phi \bs n)_{i}^{2}\right)2^{-2B} \nonumber\\
\label{eq:firsttermbound}
&\leq& \frac{K}{M}\sigma_{\bs x}^{2}2^{-2B} + \frac{N}{M}\sigma_{\bs n}^{2}2^{-2B}.
\end{eqnarray}
Combining (\ref{eq:noisefoldu}) and (\ref{eq:firsttermbound}) as in (\ref{eq:zibound}) yields
\begin{equation}\label{eq:errvarfinal}
\sigma_{z_{i}}^{2} \leq 2\frac{K}{M}\sigma_{\bs x}^{2}2^{-2B}  + 2\frac{N}{M}\sigma_{\bs n}^{2}\left(1+2^{-2B}\right).
\end{equation}

We have thus far established an upper bound on the variance $\sigma_{z_{i}}^{2}$ of the error $z_{i}$ of each measurement.  We next obtain a bound on the eigenvalues of the covariance matrix $\Sigma = \mathbb{E}(\bs z \bs z^{T})$.  The off-diagonal elements of $\Sigma$ can be written as
\begin{eqnarray}
\mathbb{E}(z_{i}z_{j})_{i\neq j} &=& \mathbb{E}((\Phi\bs x)_{i}(\Phi\bs x)_{j})) - \mathbb{E}((\Phi\bs x)_{i}\mathcal{Q}_{B}(\Phi \bs x + \Phi \bs n)_{j}) \nonumber\\ && \qquad - \mathbb{E}((\Phi\bs x)_{j}\mathcal{Q}_{B}(\Phi \bs x + \Phi \bs n)_{i}) + \mathbb{E}(\mathcal{Q}_{B}(\Phi \bs x + \Phi \bs n)_{i}\mathcal{Q}_{B}(\Phi \bs x + \Phi \bs n)_{j}) \nonumber\\
&=& - \mathbb{E}(\mathcal{Q}_{B}(\Phi \bs x + \Phi \bs n)_{i}\mathcal{Q}_{B}(\Phi \bs x + \Phi \bs n)_{j}),
\end{eqnarray}
since $\mathbb{E}((\Phi\bs x)_{i}(\Phi\bs x)_{j}))=0$ by design and, for an optimal scalar quantizer, we have that $\mathbb{E}(\mathcal{Q}_{B}(\Phi \bs x + \Phi \bs n)_{i}\mathcal{Q}_{B}(\Phi \bs x + \Phi \bs n)_{j}) = \mathbb{E}((\Phi\bs x)_{j}\mathcal{Q}_{B}(\Phi \bs x + \Phi \bs n)_{i})$~\cite{GraNeu::1998::Quantization}.  Thus, the matrix $\Sigma$ has $\sigma_{z_{i}}^{2}$ along its diagonal and $\mathfrak{S}$ for all other entries.  We next apply Gershgorin's circle theorem,  which explains that any eigenvalue is upper bounded by the diagonal entry plus the sum of the magnitudes of the off-diagonal entries of each row of $\Sigma$.  Thus, we have 
\begin{equation}
\label{eq:gersh}
\lambda_{\mathrm{max}}(\Sigma) \leq \sigma_{z_{i}}^{2} + (M-1)\mathfrak{S},
\end{equation}
where $\mathfrak{S} = \max_{i\neq j}|\mathbb{E}(z_{i}z_{j})|$.

To obtain the final bound, we combine to (\ref{eq:errvarfinal}) with (\ref{eq:gersh}) and apply the upper bound in Lemma~\ref{lem:oracleCorr}.  We express the bound with the substitution $M = \mathfrak{B}/B$. 
\end{proof}


\section*{Acknowledgments}
Thanks to Petros Boufounos, Mark Davenport, Vivek Goyal, Laurent Jacques, Christoff Studer, and John Treichler for useful discussions and sage advice.

\bibliographystyle{IEEEtran}
\footnotesize
\bibliography{mai.bbl}

\end{document}